\newtheorem{theorem}{Theorem}
\newtheorem{lemma}{Lemma}
\newtheorem{coro}[lemma]{Corollary}
\newtheorem{fact}{Fact}
\begin{document}

\title{Generalized Cut-Set Bounds for Broadcast Networks}
\author{Amir~Salimi, Tie~Liu, and Shuguang~Cui
\thanks{This paper was presented in part at the 2012 International Symposium on Network Coding (NetCod), Cambridge, MA, June 2012. This research was supported in part by the Department of Defense under Grant HDTRA1-08-1-0010 and by the National Science Foundation under Grant CCF-08-45848. The authors are with the Department of Electrical and Computer Engineering, Texas A\&M University, College Station, TX 77843, USA (email: \{salimi,tieliu,cui\}@tamu.edu).}
}
 
\maketitle

\begin{abstract}
A broadcast network is a classical network with all source messages collocated at a single source node. For broadcast networks, the standard cut-set bounds, which are known to be loose in general, are closely related to union as a specific set operation to combine the basic cuts of the network. This paper provides a new set of network coding bounds for general broadcast networks. These bounds combine the basic cuts of the network via a variety of set operations (not just the union) and are established via only the submodularity of Shannon entropy. The tightness of these bounds are demonstrated via applications to combination networks.
\end{abstract}

\section{Introduction}
A \emph{classical network} is a capaciated directed acyclic graph $((V,A),(C_a:a\in A))$, where $V$ and $A$ are the node and the arc sets of the graph respectively, and $C_a$ is the link capacity for arc $a \in A$. A \emph{broadcast network} is a classical work for which all source messages are \emph{collocated} at a single source node. 

Consider a general broadcast network with one source node $s$ and $K$ sink nodes $t_k$, $k=1,\ldots,K$ (see Figure~\ref{fig:broadNet}). The source node $s$ has access to a collection of \emph{independent} messages $\mathsf{W}_I=(\mathsf{W}_i:i \in I)$, where $I$ is a finite index set. The messages intended for the sink node $t_k$ are given by $\mathsf{W}_{I_k}$, where $I_k$ is a nonempty subset of $I$. When all messages from $\mathsf{W}_I$ are \emph{unicast} messages, i.e., each of them is intended for \emph{only} one of the sink nodes, it follows from the celebrated max-flow min-cut theorem \cite{For-CJM56} that routing can achieve the entire capacity region of the network. On the other hand, when some of the messages from $\mathsf{W}_I$ are \emph{multicast} messages, i.e., they are intended for \emph{multiple} sink nodes, the capacity region of the network is generally \emph{unknown} except when there is only one multicast message at the source node \cite{Ahl-IT00,Li-IT03,Koe-ToN03} or there are only two sink nodes ($K=2$) in the network \cite{Ere-WCCC03,Nga-ICCCAS04,Ram-CWIT05}.

In this paper, we are interested in establishing strong network coding bounds for \emph{general} broadcast networks with multiple (multicast) messages and more than two sink nodes ($K \geq 3$). In particular, we are interested in network coding bounds that rely only on the \emph{cut} structure of the network. The rational behind this particular interest is two-folded. First, cut is a well-understood combinatorial structure for networks. Second, the fact that standard cut-set bounds \cite[Ch.~15.10]{Cov-B06} are \emph{tight} for the aforementioned special cases \cite{For-CJM56,Ahl-IT00,Li-IT03,Koe-ToN03,Ere-WCCC03,Nga-ICCCAS04,Ram-CWIT05} suggests that cut as a combinatorial structure can be useful for more general broadcast-network coding problems as well.

The starting point of this work is the following simple observation. For each $k =1,\ldots,K$, let $A_k$ be a ``basic" cut that separates the source node $s$ from the (single) sink node $t_k$. Then, for any nonempty subset $U \subseteq [K]:=\{1,\ldots,K\}$ the union $\cup_{k \in U}A_k$ is also a cut that separates the source node $s$ from the ``super" sink node $t_U$, whose intended messages are given by $\mathsf{W}_{\cup_{k \in U}I_k}$. By the standard cut-set bound \cite[Ch.~15.10]{Cov-B06}, we have
\begin{align}
R(\cup_{k \in U}I_k) \leq C(\cup_{k \in U}A_k)
\label{eq:CSB}
\end{align} 
for any achievable rate tuple $R_I:=(R_i:i \in I)$. Here, $R: 2^I \rightarrow \mathbb{R}^+$ is the \emph{rate} function that corresponds to the rate tuple $R_I$ and is given by
\begin{align}
R(I') := \sum_{i \in I'}R_i, \quad \forall I' \subseteq I,
\label{eq:rate}
\end{align}
and $C: 2^A \rightarrow \mathbb{R}^+$ is the \emph{capacity} function of the network where
\begin{align}
C(A') := \sum_{a \in A'}C_a, \quad \forall A' \subseteq A.
\label{eq:cap}
\end{align}

Note that the above observation depends critically on the fact that all messages $\mathsf{W}_I$ are \emph{collocated} at the source node $s$. When the messages are \emph{distributed} among several source nodes, it is well known that the union of several basic cuts may \emph{no longer} be a cut that separates the super source node from the super sink node and hence may not lead to any network coding bounds \cite{Kra-JNSM06}. 

\begin{figure}[!t]
\centering
\includegraphics[width=0.75\linewidth,draft=false]{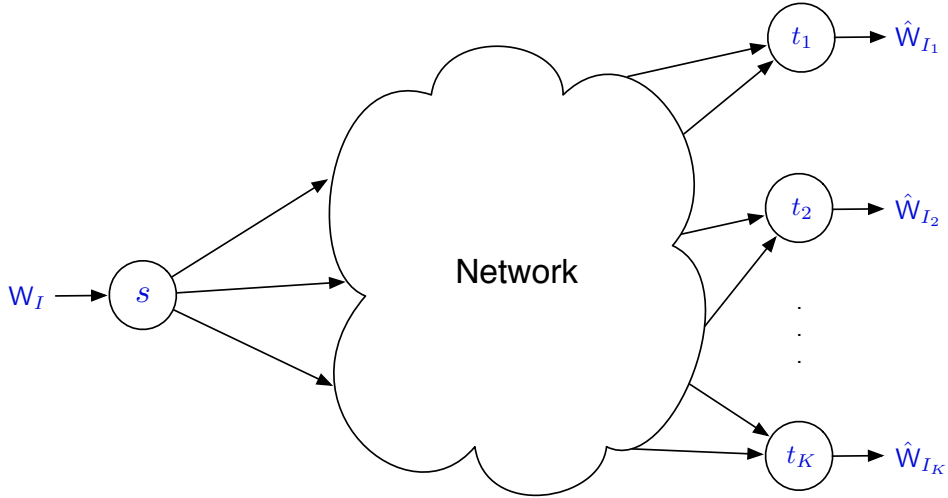}
\caption{Illustration of a general broadcast network.}
\label{fig:broadNet}
\end{figure}

Based on the above discussion, it is clear that for broadcast networks the \emph{standard} cut-set bounds \cite[Ch.~15.10]{Cov-B06} are closely related to \emph{union} as a specific set operation to combine different basic cuts of the network. Therefore, a natural question that one may ask is whether there are any other set operations (besides the union) that will also lead to nontrivial network coding bounds. 

In this paper, we provide a positive answer to the above question by establishing a new set of network coding bounds for general broadcast networks. We term these bounds \emph{generalized} cut-set bounds based on the facts that: 1) they rely only on the cut structure of the network; and 2) the set operations within the rate and the capacity functions are \emph{identical} (but not just the union any more), both similar to the case of standard cut-set bounds as in \eqref{eq:CSB}. From the proof viewpoint, as we shall see, these bounds are established via only the \emph{Shannon-type} inequalities. It is well known that all Shannon-type inequalities can be derived from the simple fact that Shannon entropy as a set function is \emph{submodular} \cite[Ch.~14.A]{Yeu-B08}. So, at heart, the generalized cut-set bounds are reflections of several new results that we establish on submodular function optimization.

The rest of the paper is organized as follows. In Section~\ref{sec:Mod} we establish several new results on submodular function optimization, which we shall use to prove the generalized cut-set bounds. A new set of network coding bounds that relate \emph{three} basic cuts of the network is provided in Section~\ref{sec:GCSB3}. The proof of these bounds is rather ``hands-on" and hence provides a good illustration on the essential idea on how to establish the generalized cut-set bounds. In Section~\ref{sec:GCSBK}, a new set of network coding bounds that relate arbitrary $K$ basic cuts of the network is provided, generalizing the bounds provided in Section~\ref{sec:GCSB3}. In Section~\ref{sec:CN}, the tightness of the generalized cut-set bounds is demonstrated via applications to \emph{combination} networks \cite{Nga-ITW04}. Finally, in Section~\ref{sec:Con} we conclude the paper with some remarks.

\section{Modular and Submodular Functions}\label{sec:Mod}
Let $S$ be a finite ground set. A function $f: 2^S \rightarrow \mathbb{R}^+$ is said to be \emph{submodular} if
\begin{align}
f(S_1)+f(S_2) &\geq f(S_1\cup S_2)+f(S_1\cap S_2), \quad \forall S_1,S_2 \subseteq S,
\label{eq:submod2}
\end{align}
and is said to be \emph{modular} if 
\begin{align}
f(S_1)+f(S_2) &= f(S_1\cup S_2)+f(S_1\cap S_2), \quad \forall S_1,S_2 \subseteq S.
\label{eq:mod2}
\end{align}

More generally, let $S_k$, $k=1,\ldots,K$, be a subset of $S$. For any nonempty subset $U$ of $[K]$ and any $r\in[|U|]$, let
\begin{align}
S^{(r)}(U) :=\cup_{\{U' \subseteq U: |U'|=r\}}\cap_{k \in U'}S_k.
\end{align}
Clearly, we have
\begin{align}
\cup_{k \in U}S_k = S^{(1)}(U) \supseteq S^{(2)}(U) \supseteq \cdots \supseteq S^{(|U|)}(U)=\cap_{k\in U}S_k
\label{eq:myOrd1}
\end{align}
for any nonempty $U \subseteq [K]$ and
\begin{align}
S^{(r)}(U') \subseteq S^{(r)}(U)
\label{eq:myOrd2}
\end{align}
for any $\emptyset \subset U' \subseteq U \subseteq [K]$ and any $r\in[|U'|]$. Furthermore, it is known that \cite[Th.~2]{Har-IT06}
\begin{align}
\sum_{k\in U}f(S_k) &\geq \sum_{r=1}^{|U|}f(S^{(r)}(U))
\label{eq:submodK}
\end{align}
if $f$ is a submodular function, and 
\begin{align}
\sum_{k\in U}f(S_k) &= \sum_{r=1}^{|U|}f(S^{(r)}(U))
\label{eq:modK}
\end{align}
if $f$ is a modular function.

Note that the standard submodularity \eqref{eq:submodK} relates $S^{(r)}(U)$ for different $r$ but a \emph{fixed} $U$. To establish the generalized cut-set bounds, however, we shall need the following technical results on modular and submodular functions that relate $S^{(r)}(U)$ for not only different $r$ but also \emph{different} $U$.

\begin{lemma}\label{lemma:1}
Let $r'$ and $J$ be two integers such that $0 \leq r' \leq J \leq K$. We have
\begin{align}
\sum_{r=1}^{r'}f(S_r)+\sum_{r=r'+1}^{J}f(S_r\cup S^{(r'+1)}([r])) & \ge
\sum_{r=1}^{r'}f(S^{(r)}([J]))+\sum_{r=r'+1}^{J}f(S^{(r'+1)}([r]))
\label{eq:1a}
\end{align}
if $f$ is a submodular function, and
\begin{align}
\sum_{r=1}^{r'}f(S_r)+\sum_{r=r'+1}^{J}f(S_r\cup S^{(r'+1)}([r])) & =
\sum_{r=1}^{r'}f(S^{(r)}([J]))+\sum_{r=r'+1}^{J}f(S^{(r'+1)}([r]))
\end{align}
if $f$ is a modular function. 
\end{lemma}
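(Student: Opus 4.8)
The plan is to establish the submodular inequality \eqref{eq:1a} using nothing beyond the two-set submodularity \eqref{eq:submod2} and the already-available bound \eqref{eq:submodK}; the modular case then follows verbatim upon replacing every ``$\ge$'' by ``$=$'', since both \eqref{eq:submod2} and \eqref{eq:submodK} hold with equality for modular $f$. First I would record a degeneracy of the boundary term: because $S^{(r'+1)}([r'+1])=\cap_{k\in[r'+1]}S_k\subseteq S_{r'+1}$, the $r=r'+1$ summand on the left of \eqref{eq:1a} collapses to $f(S_{r'+1})$, so that the left-hand side equals $\sum_{r=1}^{r'+1}f(S_r)+\sum_{r=r'+2}^{J}f(S_r\cup S^{(r'+1)}([r]))$. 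I would then split the work into two ingredients: (i) apply \eqref{eq:submodK} with $U=[r'+1]$ to the first block, giving $\sum_{r=1}^{r'+1}f(S_r)\ge\sum_{\rho=1}^{r'+1}f(S^{(\rho)}([r'+1]))$, whose top term $f(S^{(r'+1)}([r'+1]))$ is precisely the $r=r'+1$ term on the right of \eqref{eq:1a}; and (ii) telescope the remaining combined terms against the lower-order terms $f(S^{(\rho)}([r'+1]))$, $\rho=1,\dots,r'$, produced in (i).

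The heart of (ii) is the following per-step inequality, asserted for each $m=r'+2,\dots,J$:
\begin{align}
\sum_{\rho=1}^{r'}f(S^{(\rho)}([m-1]))+f(S_m\cup S^{(r'+1)}([m])) &\ge \sum_{\rho=1}^{r'}f(S^{(\rho)}([m]))+f(S^{(r'+1)}([m])).
\label{eq:pstep}
\end{align}
Summing \eqref{eq:pstep} over $m=r'+2,\dots,J$ makes the ``state'' terms $\sum_{\rho=1}^{r'}f(S^{(\rho)}([m]))$ telescope, leaving $\sum_{\rho=1}^{r'}f(S^{(\rho)}([r'+1]))$ on the left and $\sum_{\rho=1}^{r'}f(S^{(\rho)}([J]))$ on the right, together with the shed terms $f(S^{(r'+1)}([m]))$ for $m=r'+2,\dots,J$. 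Combining this with ingredient (i) and the collapsed boundary term then reconstitutes exactly the right-hand side of \eqref{eq:1a}, namely $\sum_{\rho=1}^{r'}f(S^{(\rho)}([J]))$ plus $\sum_{r=r'+1}^{J}f(S^{(r'+1)}([r]))$.

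To prove \eqref{eq:pstep} I would fix $m$ and write $a_\rho:=S^{(\rho)}([m-1])$ (with $a_0:=S$) and $b_\rho:=S_m\cap a_{\rho-1}$, so that $b_1=S_m$. From the definition of $S^{(\rho)}$ one checks the add-one-set recursion $S^{(\rho)}([m])=a_\rho\cup b_\rho$ and the nesting $a_\rho\supseteq a_{\rho+1}$ (cf.\ \eqref{eq:myOrd1}, \eqref{eq:myOrd2}), while $a_\rho\cap S_m=b_{\rho+1}$ holds by the very definition of $b_{\rho+1}$. Setting $X_\rho:=a_\rho\cup S_m$, I would apply \eqref{eq:submod2} to the pair $a_\rho$ and $X_{\rho+1}$: since $a_{\rho+1}\subseteq a_\rho$ their union is $a_\rho\cup S_m=X_\rho$, while their intersection is $a_{\rho+1}\cup(a_\rho\cap S_m)=a_{\rho+1}\cup b_{\rho+1}=S^{(\rho+1)}([m])$. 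Summing these two-set inequalities over $\rho=1,\dots,r'$ telescopes the $X_\rho$, and using $X_{r'+1}=S_m\cup S^{(r'+1)}([m])$ and $X_1=S^{(1)}([m])$ yields exactly \eqref{eq:pstep}.

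The main obstacle is combinatorial rather than analytic: it is the verification that, along the chain, the pair $(a_\rho,X_{\rho+1})$ has union $X_\rho$ and intersection $S^{(\rho+1)}([m])$, so that each use of \eqref{eq:submod2} manufactures one genuine target term $f(S^{(\rho+1)}([m]))$ while passing the single running set $X_\rho$ to the next step. Getting this bookkeeping right---peeling the first $r'+1$ singletons via \eqref{eq:submodK} and discovering the correct intermediate sets $X_\rho=S^{(\rho)}([m-1])\cup S_m$---is what condenses the entire left-hand side into the right-hand side using only submodularity, with no appeal to monotonicity of $f$. Once \eqref{eq:pstep} is in hand, both assertions of the lemma follow, the modular one because every inequality invoked above is an equality for modular $f$.
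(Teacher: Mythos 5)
Your proof is correct, and it takes a genuinely different route from the paper's. The paper's proof (Appendix~A) is a one-shot argument: it augments the sets as $G_r := S_r \cup T_r$, with $T_r = \emptyset$ for $r \le r'$ and $T_r = S^{(r'+1)}([r])$ for $r > r'$, applies the multiway submodularity \eqref{eq:submodK} a single time to the whole collection $\{G_r\}_{r=1}^{J}$, and then does all the real work combinatorially by proving the identity $G^{(r)}([J]) = S^{(r)}([J])$ for $r \le r'$ and $G^{(r)}([J]) = S^{(r'+1)}([J-r+r'+1])$ for $r > r'$ (Fact~1 and the computations around it occupy most of the appendix). You instead proceed inductively in the number of participating sets: you peel off the block $\sum_{r=1}^{r'+1} f(S_r)$ via \eqref{eq:submodK}, using the collapse $S_{r'+1}\cup S^{(r'+1)}([r'+1])=S_{r'+1}$, and then prove a per-step inequality for each new set $S_m$ by a chain of two-set submodularity applications \eqref{eq:submod2} to the pairs $\bigl(a_\rho,\, a_{\rho+1}\cup S_m\bigr)$ with $a_\rho = S^{(\rho)}([m-1])$, followed by telescoping in $m$. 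I checked the bookkeeping: the pair $\bigl(a_\rho, X_{\rho+1}\bigr)$ with $X_\rho := a_\rho \cup S_m$ does have union $X_\rho$ and intersection $a_{\rho+1}\cup(a_\rho\cap S_m) = S^{(\rho+1)}([m])$ by the add-one-element recursion $S^{(\rho+1)}([m]) = S^{(\rho+1)}([m-1]) \cup \bigl(S_m \cap S^{(\rho)}([m-1])\bigr)$; the $X_\rho$-terms and then the state terms telescope exactly as you claim, and combining with the peeled block reconstitutes \eqref{eq:1a}, with every step an equality in the modular case. What the paper's route buys is that the only inequality invoked is one instance of \eqref{eq:submodK}, with all difficulty isolated in a (delicate) set identity; what your route buys is far lighter set-theoretic machinery---only the trivial nesting and recursion---and a more transparent inductive structure in which each new set $S_m$ costs exactly $r'$ submodular exchanges to manufacture the new term $f(S^{(r'+1)}([m]))$. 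One small remark: your opening collapse of the $r=r'+1$ summand presupposes $r' < J$; for $r'=J$ the claim is just \eqref{eq:submodK} itself and for $r'=0$ it is trivial, which matches how the paper also treats these edge cases outside its appendix proof.
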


Note that when $r'=0$, we have $S^{(r'+1)}([r]) = S^{(1)}([r]) = \cup_{k=1}^{r}S_k \supseteq S_r$ for any $r=1,\ldots,J$. In this case, the inequality \eqref{eq:1a} reduces to the 
trivial equality
\begin{align}
\sum_{r=1}^{J}f(S^{(1)}([r])) &= \sum_{r=1}^{J}f(S^{(1)}([r])).
\end{align}
On the other hand, when $r'=J$, the inequality \eqref{eq:1a} reduces to the standard submodularity
\begin{align}
\sum_{r=1}^{J}f(S_r) & \geq \sum_{r=1}^{J}f(S^{(r)}([J])).
\end{align}
For the general case where $0 < r' < J$, a proof of the lemma is provided in Appendix~\ref{app:pf-lemma1}.

Let $S'_k:=S_k\cup S_0$ for $k=1,\ldots,K$. For any nonempty $U \subseteq [K]$ and any $r=1,\ldots,|U|$ we have
\begin{align}
S'^{(r)}(U) 
&=\cup_{\{U' \subseteq U: |U'|=r\}}\cap_{k \in U'}S'_k\\
&=\cup_{\{U' \subseteq U: |U'|=r\}}\cap_{k \in U'}(S_k\cup S_0)\\
& = \left(\cup_{\{U' \subseteq U: |U'|=r\}}\cap_{k \in U'}S_k\right)\cup S_0\\
&= S^{(r)}(U) \cup S_0.\label{eq:1c}
\end{align}
Applying Lemma~\ref{lemma:1} for $S'_k$, $k=1,\ldots,K$, and \eqref{eq:1c}, we have the following corollary.

\begin{coro}\label{cor:1}
Let $r'$ and $J$ be two integers such that $0 \leq r' \leq J \leq K$, and let $S_0$ be a subset of $S$. We have
\begin{align}
\sum_{r=1}^{r'}f(S_r\cup S_0)&+\sum_{r=r'+1}^{J}f(S_r\cup S^{(r'+1)}([r])\cup S_0) \notag\\
& \ge \sum_{r=1}^{r'}f(S^{(r)}([J])\cup S_0)+\sum_{r=r'+1}^{J}f(S^{(r'+1)}([r])\cup S_0)
\label{eq:2a}
\end{align}
if $f$ is a submodular function, and
\begin{align}
\sum_{r=1}^{r'}f(S_r\cup S_0)&+\sum_{r=r'+1}^{J}f(S_r\cup S^{(r'+1)}([r])\cup S_0)\notag\\
& = \sum_{r=1}^{r'}f(S^{(r)}([J])\cup S_0)+\sum_{r=r'+1}^{J}f(S^{(r'+1)}([r])\cup S_0)
\end{align}
if $f$ is a modular function. 
\end{coro}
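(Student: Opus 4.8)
The plan is to derive the corollary as an immediate specialization of Lemma~\ref{lemma:1} applied to the translated family $S'_k := S_k \cup S_0$, $k=1,\ldots,K$. The only real content is that translating every set by a common $S_0$ commutes with the $S^{(r)}(\cdot)$ construction, which is exactly the set identity \eqref{eq:1c} recorded above, namely $S'^{(r)}(U) = S^{(r)}(U)\cup S_0$ for every nonempty $U \subseteq [K]$ and every admissible $r$.

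First I would invoke Lemma~\ref{lemma:1} with each $S_k$ replaced by $S'_k$. Since $S_0 \subseteq S$ and each $S_k \subseteq S$, we have $S'_k \subseteq S$, so the family $\{S'_k\}$ is an admissible family of subsets of the same ground set and Lemma~\ref{lemma:1} (with the same $r'$ and $J$) applies to it verbatim. This produces inequality \eqref{eq:1a} written in the primed notation, with the four groups of terms $f(S'_r)$, $f(S'_r \cup S'^{(r'+1)}([r]))$, $f(S'^{(r)}([J]))$, and $f(S'^{(r'+1)}([r]))$.

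Next I would translate each of these back into the original notation via \eqref{eq:1c}. Three of the four substitutions are immediate: $f(S'_r) = f(S_r\cup S_0)$, $f(S'^{(r)}([J])) = f(S^{(r)}([J])\cup S_0)$, and $f(S'^{(r'+1)}([r])) = f(S^{(r'+1)}([r])\cup S_0)$. The only term needing a moment's care is the second summand on the left, where $f(S'_r \cup S'^{(r'+1)}([r])) = f\big((S_r\cup S_0)\cup(S^{(r'+1)}([r])\cup S_0)\big) = f(S_r \cup S^{(r'+1)}([r])\cup S_0)$, the two copies of $S_0$ collapsing by idempotency of union. After these replacements the primed inequality is precisely \eqref{eq:2a}.

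Since this is a reduction rather than a fresh argument, there is essentially no obstacle. The only points to check are that \eqref{eq:1c} covers every set that appears (both $S^{(r)}([J])$ for $1 \le r \le r'$ and $S^{(r'+1)}([r])$ for $r'+1 \le r \le J$), which holds because \eqref{eq:1c} is stated for an arbitrary nonempty $U$ and admissible $r$, and that the duplicate $S_0$ in the second left-hand term collapses correctly as noted. The modular case follows in exactly the same way, using the equality version of Lemma~\ref{lemma:1}; the set identity \eqref{eq:1c} is exact and so is indifferent to the submodular-versus-modular distinction.
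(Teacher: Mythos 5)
Your proposal is correct and follows exactly the paper's own route: the paper likewise obtains Corollary~\ref{cor:1} by applying Lemma~\ref{lemma:1} to the translated family $S'_k := S_k \cup S_0$ and then using the identity \eqref{eq:1c}, $S'^{(r)}(U) = S^{(r)}(U)\cup S_0$, to rewrite every term (with the duplicate $S_0$ absorbed by idempotency of union, as you note). Nothing is missing; the modular case follows identically from the equality version of the lemma.
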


We shall also need the following lemma, for which a proof is provided in Appendix~\ref{app:pf-lemma2}.

\begin{lemma}\label{lemma:2}
Let $U$ and $T$ be two nonempty subsets of $[K]$. Write, without loss of generality, that $T=\{t_1,\ldots,t_{|T|}\}$ where $1 \leq t_1 < t_2 < \cdots < t_{|T|} \leq K$. Let $q$ and $r_q$ be two integers such that $1 \leq q \leq |U|$, $1 \leq r_q \leq |T|$, and $S^{(q)}(U) \subseteq S^{(r_q)}(T)$. We have
\begin{align}
\sum_{r=1}^{|T|}&f(S_{t_r})+r_qf(S^{(q)}(U))\notag\\
& \geq \sum_{r=1}^{r_q}\left(f(S^{(r)}(T))+f(S_{t_r}\cap S^{(q)}(U))\right)+\sum_{r=r_q+1}^{|T|}f(S_{t_r}\cap(S^{(q)}(U)\cup S^{(r_q+1)}(\{t_1,\ldots,t_r\})))
\label{eq:3a}
\end{align}
if $f$ is a submodular function, and
\begin{align}
\sum_{r=1}^{|T|}&f(S_{t_r})+r_qf(S^{(q)}(U))\notag\\
& = \sum_{r=1}^{r_q}\left(f(S^{(r)}(T))+f(S_{t_r}\cap S^{(q)}(U))\right)+\sum_{r=r_q+1}^{|T|}f(S_{t_r}\cap(S^{(q)}(U)\cup S^{(r_q+1)}(\{t_1,\ldots,t_r\})))\label{eq:3b}
\end{align}
if $f$ is a modular function.
\end{lemma}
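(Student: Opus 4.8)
The plan is to introduce the shorthand $W := S^{(q)}(U)$ and to prove \eqref{eq:3a} by \emph{summing three families of submodular inequalities and canceling the common terms that appear on both aggregate sides}. The hypothesis $S^{(q)}(U) \subseteq S^{(r_q)}(T)$ reads $W \subseteq S^{(r_q)}(T)$, and since $S^{(1)}(T) \supseteq \cdots \supseteq S^{(|T|)}(T)$ by \eqref{eq:myOrd1}, it follows that $W \subseteq S^{(r)}(T)$ for every $r \le r_q$, hence $S^{(r)}(T)\cup W = S^{(r)}(T)$ for such $r$. I would record this simplification at the outset, since it is precisely what makes the \emph{bare} levels $S^{(r)}(T)$ (rather than $S^{(r)}(T)\cup W$) appear in the first sum on the right-hand side of \eqref{eq:3a}.

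The three ingredients are as follows. First, I would apply Corollary~\ref{cor:1} to the family $S_{t_1},\ldots,S_{t_{|T|}}$, so that its ``$S^{(r)}([J])$'' becomes $S^{(r)}(T)$ and its ``$S^{(r'+1)}([r])$'' becomes $S^{(r_q+1)}(\{t_1,\ldots,t_r\})$, taking $S_0=W$, $r'=r_q$ and $J=|T|$; using $S^{(r)}(T)\cup W=S^{(r)}(T)$ for $r\le r_q$, this yields a bound whose right-hand side already contains $\sum_{r=1}^{r_q}f(S^{(r)}(T))$ together with $\sum_{r=r_q+1}^{|T|}f(S^{(r_q+1)}(\{t_1,\ldots,t_r\})\cup W)$. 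Second, for each $r=r_q+1,\ldots,|T|$ I would apply the plain submodularity \eqref{eq:submod2} to the pair $\bigl(S_{t_r},\,W\cup S^{(r_q+1)}(\{t_1,\ldots,t_r\})\bigr)$, giving $f(S_{t_r})+f(W\cup S^{(r_q+1)}(\{t_1,\ldots,t_r\})) \ge f(S_{t_r}\cup W\cup S^{(r_q+1)}(\{t_1,\ldots,t_r\}))+f(S_{t_r}\cap(W\cup S^{(r_q+1)}(\{t_1,\ldots,t_r\})))$. Third, for each $r=1,\ldots,r_q$ I would apply \eqref{eq:submod2} to the pair $(S_{t_r},W)$, giving $f(S_{t_r})+f(W)\ge f(S_{t_r}\cup W)+f(S_{t_r}\cap W)$ and accounting for the $r_q\,f(W)$ term.

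Adding the three inequalities and comparing the two aggregate sides, the engineered terms cancel in pairs, each appearing on the left of one inequality and on the right of another: for $r>r_q$ the terms $f(S_{t_r}\cup W\cup S^{(r_q+1)}(\{t_1,\ldots,t_r\}))$ and $f(W\cup S^{(r_q+1)}(\{t_1,\ldots,t_r\}))$ (shared between Corollary~\ref{cor:1} and the second family) drop out, while for $r\le r_q$ the terms $f(S_{t_r}\cup W)$ (shared between Corollary~\ref{cor:1} and the third family) drop out. What survives on the left is exactly $\sum_{r=1}^{|T|}f(S_{t_r})+r_q f(W)$, and on the right exactly the right-hand side of \eqref{eq:3a}. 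For the modular statement \eqref{eq:3b}, each of the three inequalities is replaced by the corresponding equality (the modular case of Corollary~\ref{cor:1} and of \eqref{eq:submod2}), and the identical cancellation delivers the equality.

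The main obstacle is one of \emph{discovery}: finding the correct auxiliary pairs so that every unwanted term cancels and precisely the target remains. I would locate them by first verifying the modular identity \eqref{eq:3b} through element counting, checking that each $x\in S$ is counted the same number of times on both sides (splitting into the cases $x\notin W$ and $x\in W$, the latter forcing $x$ into at least $r_q$ of the $S_{t_r}$ and using the prefix-count bookkeeping) and that both sides carry the same number of set-terms, namely $|T|+r_q$. This counting both confirms the equality and reveals which unions and intersections must be introduced; once they are in hand, assembling the three inequalities and checking the cancellations is routine.
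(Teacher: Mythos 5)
Your proposal is correct and is essentially the paper's own proof: the three ingredients you sum---Corollary~\ref{cor:1} with $r'=r_q$, $J=|T|$, $S_0=S^{(q)}(U)$, the pairs $(S_{t_r},S^{(q)}(U))$ for $r\le r_q$, and the pairs $(S_{t_r},S^{(q)}(U)\cup S^{(r_q+1)}(\{t_1,\ldots,t_r\}))$ for $r>r_q$, together with the simplification $S^{(r)}(T)\cup S^{(q)}(U)=S^{(r)}(T)$ for $r\le r_q$---are exactly those in Appendix~\ref{app:pf-lemma2}, which merely organizes them as an add-and-subtract chain instead of your sum-and-cancel aggregation.
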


For specific functions, let $\mathsf{Z}_S:=(\mathsf{Z}_i:i \in S)$ be a collection of jointly distributed random variables, and let $H(\mathsf{Z}_S)$ be the joint (Shannon) entropy of $\mathsf{Z}_S$. Then, it is well known \cite[Ch.~14.A]{Yeu-B08} that $H_\mathsf{Z}: 2^S \rightarrow \mathbb{R}^+$ where 
\begin{align}
H_{\mathsf{Z}}(S') := H(\mathsf{Z}_{S'}), \quad \forall S' \subseteq S
\end{align}
is a \emph{submodular} function. Furthermore, it is straightforward to verify that the rate function $R(\cdot)$ (for a given rate tuple $R_I$) and the capacity function $C(\cdot)$, defined in \eqref{eq:rate} and \eqref{eq:cap} respectively, are \emph{modular} functions.

\section{Generalized Cut-Set Bounds Relating Three Basic Cuts of the Network}\label{sec:GCSB3}
\subsection{Main Result}
\begin{theorem}\label{thm:GCSB3}
Consider a broadcast network with a collection of independent messages $\mathsf{W}_I$ collocated at the source node $s$ and $K \geq 3$ sink nodes $t_k$, $k=1,\ldots,K$. For any $k=1,\ldots,K$, let $\mathsf{W}_{I_k}$ be the intended messages for the sink node $t_k$, and let $A_k$ be a basic cut that separates the source node $s$ from the sink node $t_k$. We have
\begin{align}
R(I_i \cup I_j \cup I_k)+R&(I_i \cap I_j) \leq C(A_i \cup A_j \cup A_k)+C(A_i \cap A_j),\label{eq:GCSB3a}\\
R(I_i \cup I_j \cup I_k)+R&((I_i \cap I_j) \cup (I_i \cap I_k) \cup (I_j \cap I_k))\notag\\
&\leq C(A_i\cup A_i \cup A_k)+C((A_i \cap A_j) \cup (A_i \cap A_k) \cup (A_j \cap A_k)),\label{eq:GCSB3b}\\
R(I_i \cup I_j \cup I_k)+R&(I_i \cup I_j) +R(I_i \cap I_j \cap I_k)\notag\\
&\leq C(A_i \cup A_j \cup A_k)+C(A_i \cup A_j)+C(A_i\cap A_j \cap A_k),\label{eq:GCSB3c}\\
\mbox{and} \quad 2R(I_i \cup I_j \cup I_k)+R&(I_i \cap I_j \cap I_k) \leq 2C(A_i\cup A_j \cup A_k)+C(A_i \cap A_j \cap A_k)\label{eq:GCSB3d}
\end{align}
for any achievable rate tuple $R_I$ and any three distinct integers $i$, $j$, and $k$ from $[K]$.
\end{theorem}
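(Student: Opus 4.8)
The plan is to fix three distinct indices $i,j,k$, relabel them as $1,2,3$, and reduce each of \eqref{eq:GCSB3a}--\eqref{eq:GCSB3d} to a pure submodular-function statement about the joint entropy of the arc variables. Associate with every arc $a$ the random variable $\mathsf{Y}_a$ carried on $a$ by a fixed achievable network code; because all messages are collocated, each $\mathsf{Y}_a$ is a function of the independent messages $\mathsf{W}_I$. Two facts drive the argument. First, \emph{decodability}: for any nonempty $U\subseteq\{1,2,3\}$ the messages $\mathsf{W}_{\cup_{k\in U}I_k}$ are a function of the arc variables on $\cup_{k\in U}A_k$, so by independence $R(\cup_{k\in U}I_k)=H(\mathsf{W}_{\cup_{k\in U}I_k})\le H(\mathsf{Y}_{\cup_{k\in U}A_k})$. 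Second, \emph{capacity}: $H(\mathsf{Y}_{A'})\le C(A')$ for every arc set $A'$. Since $H_{\mathsf{Y}}$ is submodular while $R$ and $C$ are modular, and since in each bound the rate side and the capacity side are built from the \emph{same} set operations, the whole task is to exhibit a single entropy expression $E$ (assembled from the cut variables) that is sandwiched as $(\text{rate side})\le E\le(\text{capacity side})$, where the left inequality comes from decodability and independence and the right one from $H\le C$ together with the modularity identities \eqref{eq:modK}.

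Before assembling anything I would isolate the genuine difficulty, which is that one cannot bound the two sides term by term: already $R(I_1\cap I_2)\le C(A_1\cap A_2)$ is false (take disjoint cuts $A_1\cap A_2=\emptyset$ carrying a message common to $t_1,t_2$, forcing $R(I_1\cap I_2)>0=C(A_1\cap A_2)$). Hence the ``extra'' intersection rate-terms must be charged \emph{jointly}, across all three cuts, and if this is done carelessly the intersection contributions simply cancel and one recovers only the standard cut-set bound. Avoiding this cancellation is exactly what the submodularity of $H_{\mathsf{Y}}$ buys: it lets an arc lying in $A_1\cap A_2$ be counted with multiplicity two (matching the multiplicity of that arc in $C(A_1\cup A_2\cup A_3)+C(A_1\cap A_2)$) while still being lower-bounded against the rates. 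Concretely, for each of the four bounds I would run a short, explicit sequence of applications of the pairwise submodularity \eqref{eq:submod2} (and the fixed-$U$ form \eqref{eq:submodK}) to $f=H_{\mathsf{Y}}$ with $S_k=A_k$, ordered so that the resulting entropy inequality has the entropy analogues of the capacity terms on one side and combinations that decodability controls on the other. These three-cut identities are themselves special cases of Lemma~\ref{lemma:2}, whose role is precisely to relate $S^{(r)}(U)$ across \emph{different} subsets $U$ (e.g. comparing $A_1\cap A_2=S^{(2)}(\{1,2\})$ with the cut family of the full triple); the containment hypothesis needed to invoke it is checked through the monotonicity relations \eqref{eq:myOrd1}--\eqref{eq:myOrd2}.

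The assembly is then uniform across the four inequalities. Having produced the bridging expression $E$, I would use decodability to replace the union-of-cuts entropies appearing in the lower bound by the corresponding rates, and the modularity of $R$ to recombine these peeled-off rates into the stated union/intersection rate terms on the left; symmetrically, I would apply $H_{\mathsf{Y}}(\cdot)\le C(\cdot)$ to the entropy terms in the upper bound and then collapse them, via the modular identity \eqref{eq:modK}, onto exactly $C(A_i\cup A_j\cup A_k)+C(A_i\cap A_j)$ for \eqref{eq:GCSB3a}, onto $C(S^{(1)}(\{i,j,k\}))+C(S^{(2)}(\{i,j,k\}))$ for \eqref{eq:GCSB3b}, and onto the corresponding expressions involving the triple intersection $S^{(3)}(\{i,j,k\})=A_i\cap A_j\cap A_k$ for \eqref{eq:GCSB3c} and \eqref{eq:GCSB3d}. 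The four bounds differ only in which submodular skeleton is chosen: \eqref{eq:GCSB3a} uses a two-cut-plus-one-cut split, \eqref{eq:GCSB3b} the symmetric $S^{(1)}/S^{(2)}$ pattern, and \eqref{eq:GCSB3c}--\eqref{eq:GCSB3d} the patterns that additionally retain the triple intersection.

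I expect the main obstacle to be combinatorial rather than information-theoretic: pinning down, for each of the four bounds, the exact sequence of submodular steps (equivalently, the exact specialization of Lemma~\ref{lemma:2}) and the order in which the three cuts are processed, so that the decodability lower bounds align with the rate side and the $H\le C$ upper bounds align with the capacity side \emph{without} the intersection terms cancelling. Once that skeleton is fixed, verifying the containment hypotheses through \eqref{eq:myOrd1}--\eqref{eq:myOrd2} and checking that independence and decodability supply precisely the required entropy inequalities is routine. Because $K=3$ keeps the number of distinct intersection patterns small, this bookkeeping can be carried out explicitly by hand, which is why the three-cut case admits the direct treatment promised here, ahead of the general-$K$ argument of Section~\ref{sec:GCSBK}.
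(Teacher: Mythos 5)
Your proposal has a genuine gap, and it sits exactly where you located the ``main obstacle'': your toolkit cannot produce the intersection rate terms at all. Every submodular step you propose is applied to $f=H_{\mathsf{Y}}$ with $S_k=A_k$ (arc variables over arc sets), and rates enter only through the union-decodability inequalities $R(\cup_{k\in U}I_k)\le H(\mathsf{Y}_{\cup_{k\in U}A_k})$, the modularity of $R$ and $C$, and $H\le C$. These facts are jointly consistent with the \emph{negation} of \eqref{eq:GCSB3a}, so no ordering of submodular steps can close the argument. Concretely: take two arcs $a,b$ with $C_a=C_b=1$; cuts $A_1=\{a\}$, $A_2=\{b\}$, $A_3=\{a,b\}$ (a two-branch combination network); one message of rate $R_c=1$ demanded by both $t_1$ and $t_2$ (so $I_1=I_2=\{c\}$) and one message of rate $R_3=1$ demanded by $t_3$. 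Assign to the arc variables two independent unit-entropy random variables, so $H(\mathsf{Y}_a)=H(\mathsf{Y}_b)=1$ and $H(\mathsf{Y}_a,\mathsf{Y}_b)=2$. Then every fact you use holds: this is a genuine entropy function on the arc ground set, so every instance of \eqref{eq:submod2}, \eqref{eq:submodK}, Lemma~\ref{lemma:1}, and Lemma~\ref{lemma:2} applied to it is valid; $H(\mathsf{Y}_{A'})\le C(A')$ for all $A'$; and all seven inequalities $R(\cup_{k\in U}I_k)\le H(\mathsf{Y}_{\cup_{k\in U}A_k})$ hold (the binding ones being $1\le H(\mathsf{Y}_a)$, $1\le H(\mathsf{Y}_b)$, and $2\le H(\mathsf{Y}_a,\mathsf{Y}_b)$). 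Yet \eqref{eq:GCSB3a} with $(i,j,k)=(1,2,3)$ reads $2R_c+R_3=3\le C(\{a,b\})+C(\emptyset)=2$, which is false. The root cause is visible in your own disjoint-cuts example: arc-only submodularity yields $I(\mathsf{Y}_{A_i};\mathsf{Y}_{A_j})\ge H(\mathsf{Y}_{A_i\cap A_j})$, which is vacuous when $A_i\cap A_j=\emptyset$, while modularity of $R$ can only express $R(I_i\cap I_j)$ as $R(I_i)+R(I_j)-R(I_i\cup I_j)$, whose negative term decodability cannot control.

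The missing idea --- the crux of the paper's proof --- is to \emph{augment the ground set} before invoking submodularity: work with the joint entropy $H_{\mathsf{X},\mathsf{W}}$ over arcs together with message indices, and apply \eqref{eq:submod2} to the pairs $(A_i,I_i)$ and $(A_j,I_j)$. The augmentation costs nothing because $\mathsf{W}_{I_k}$ is a function of $\mathsf{X}_{A_k}$, so $H_{\mathsf{X},\mathsf{W}}(A_k,I_k)=H_{\mathsf{X}}(A_k)$; but intersections now retain the common messages, giving $I_{\mathsf{X}}(A_i;A_j)=I_{\mathsf{X},\mathsf{W}}(A_i,I_i;A_j,I_j)\ge H_{\mathsf{X},\mathsf{W}}(A_i\cap A_j,I_i\cap I_j)\ge H_{\mathsf{W}}(I_i\cap I_j)=nR(I_i\cap I_j)$. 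This is precisely steps \eqref{pf3:40}--\eqref{pf3:50} and \eqref{pf3:60} of the paper, and it is exactly what rules out the numerical assignment above (there $I(\mathsf{Y}_a;\mathsf{Y}_b)=0$ forces $R_c=0$). With that replacement, the rest of your outline does become the paper's proof: the peeling of $H_{\mathsf{X}}(A_k\setminus(A_i\cup A_j))$ via the independence bound and the collapse of capacity terms by modularity are as you describe, and for \eqref{eq:GCSB3b}--\eqref{eq:GCSB3d} the additional submodular steps \eqref{pf3:120} and \eqref{pf3:260} are again taken on augmented pairs. Note also that the paper's proof of Theorem~\ref{thm:GCSB3} does not invoke Lemma~\ref{lemma:2} at all; that lemma enters only in Theorem~\ref{thm:GCSBK}, and even there it is applied to $H_{\mathsf{X},\mathsf{W}}$ with pairs $(A_k,I_k)$, never to the arc-only entropy with the sets $A_k$.
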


\begin{figure}[!t]
\centering
\includegraphics[width=0.8\linewidth,draft=false]{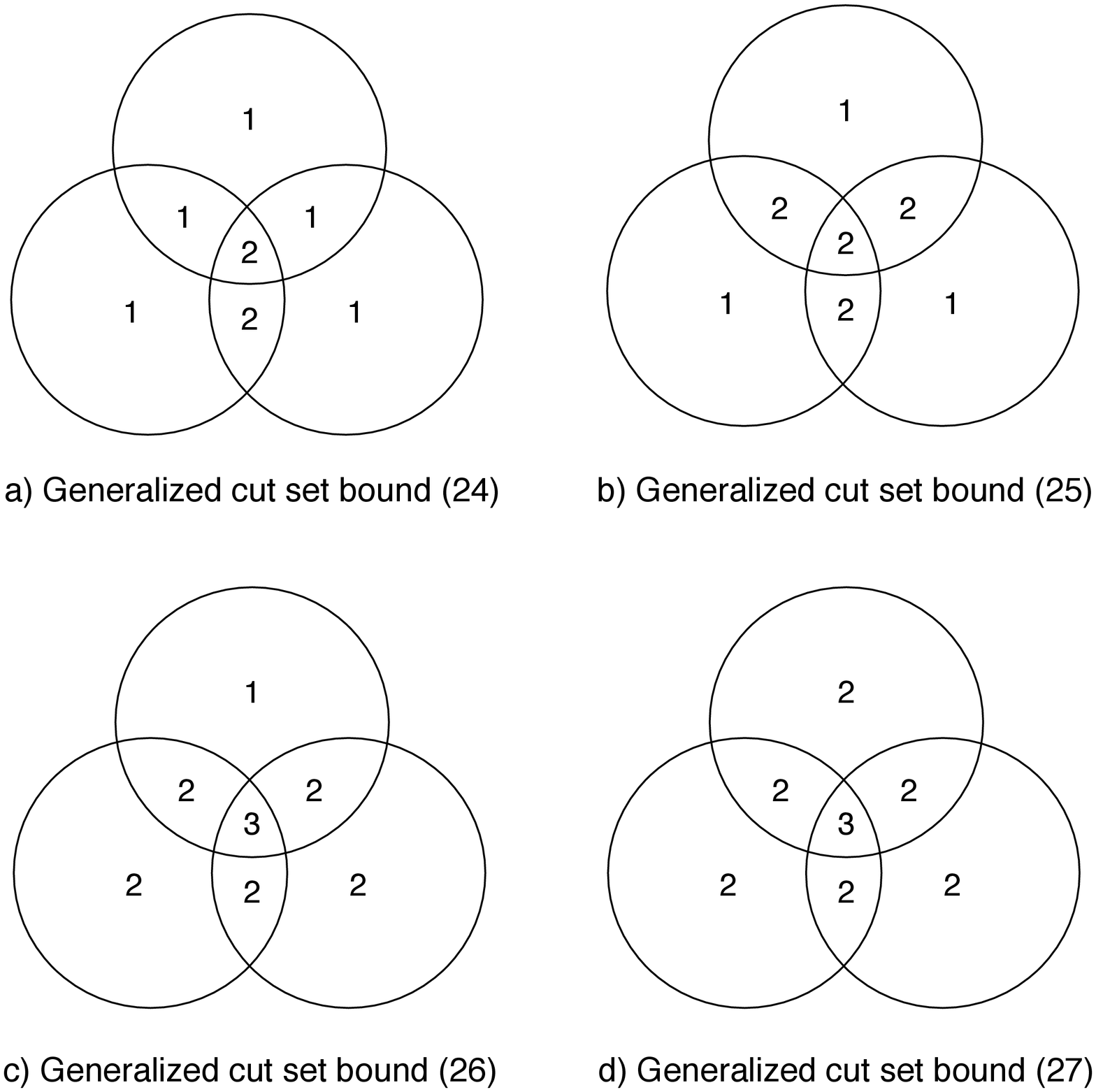}
\caption{The weight distributions for the generalized cut-set bounds \eqref{eq:GCSB3a}--\eqref{eq:GCSB3d}. Here, each circle represents the set of the messages intended for a particular sink node. The number within each separate area indicates the weight for the rates of the messages represented by the area.}
\label{fig:WD}
\end{figure}

Note that the left-hand sides of the generalized cut-set bounds \eqref{eq:GCSB3a}--\eqref{eq:GCSB3d} are \emph{weighted} sum rates with integer weights on the rates of the messages from $\mathsf{W}_{I_i \cup I_j \cup I_k}$. Figure~\ref{fig:WD} illustrates the weight distributions for the generalized cut-set bounds \eqref{eq:GCSB3a}--\eqref{eq:GCSB3d}.

\subsection{Proof of Theorem~\ref{thm:GCSB3}}
Let $(n,\{\mathsf{X}_a: a \in A\})$ be an \emph{admissible} code with block length $n$, where $\mathsf{X}_a$ is the message transmitted over the arc $a$. By the independence bound \cite[Th~2.6.6]{Cov-B06} and the link-capacity constraints, we have
\begin{align}
H_{\mathsf{X}}(A') \leq \sum_{a \in A'}H(\mathsf{X}_a) \leq n\sum_{a \in A'}C_a = nC(A'), \quad \forall A' \subseteq A.
\label{eq:linkCap}
\end{align}
For notational simplicity, in this proof we shall assume \emph{perfect} recovery of the messages at each of the sink nodes. It should be clear from the proof that by applying the well-known Fano's inequality \cite[Th~2.10.1]{Cov-B06}, the results also hold for \emph{asymptotically perfect} recovery. By the perfect recovery requirement, for any nonempty subset $U \subseteq [K]$ the collection of the messages $\mathsf{W}_{\cup_{k \in U}I_k}$ must be a \emph{function} of the messages $\mathsf{X}_{\cup_{k \in U}A_k}$ transmitted over the $s$-$t_U$ cut $\cup_{k \in U}A_k$. We thus have
\begin{align}
H_{\mathsf{W}}(\cup_{k \in U}I_k) \leq H_{\mathsf{X}}(\cup_{k \in U}A_k), \quad \forall U \subseteq [K].
\label{eq:perfectDec}
\end{align}

\noindent \emph{Proof of \eqref{eq:GCSB3a}.} Let $U=\{i,j,k\}$ in \eqref{eq:perfectDec}. Denote by 
\begin{align}
I_{\mathsf{X}}(A_i;A_j) := I(\mathsf{X}_{A_i};\mathsf{X}_{A_j})
\end{align}
the mutual information between $\mathsf{X}_{A_i}$ and $\mathsf{X}_{A_j}$. We have
\begin{align}
H_{\mathsf{W}}(I_i\cup I_j \cup I_k) & \leq H_{\mathsf{X}}(A_i\cup A_j \cup A_k)\label{pf3:10}\\
& = H_{\mathsf{X}}(A_i)+H_{\mathsf{X}}(A_j|A_i)+H_{\mathsf{X}}(A_k|A_i\cup A_j)\label{pf3:20}\\
& = H_{\mathsf{X}}(A_i)+\left(H_{\mathsf{X}}(A_j)-I_{\mathsf{X}}(A_i;A_j)\right)+\left(H_{\mathsf{X}}(A_k)-I_{\mathsf{X}}(A_k;A_i\cup A_j)\right)\label{pf3:30}\\
& = H_{\mathsf{X}}(A_i)+\left(H_{\mathsf{X}}(A_j)-I_{\mathsf{X},\mathsf{W}}(A_i,I_i;A_j,I_j)\right)+\notag\\
& \hspace{15pt} \left(H_{\mathsf{X}}(A_k)-I_{\mathsf{X},\mathsf{W}}(A_k,I_k;A_i\cup A_j,I_i\cup I_j)\right)\label{pf3:40}\\
& \leq H_{\mathsf{X}}(A_i)+\left(H_{\mathsf{X}}(A_j)-H_{\mathsf{X},\mathsf{W}}(A_i\cap A_j,I_i\cap I_j)\right)+\notag\\
& \hspace{15pt} \left(H_{\mathsf{X}}(A_k)-H_{\mathsf{X},\mathsf{W}}(A_k\cap(A_i\cup A_j),I_k\cap(I_i\cup I_j))\right)\label{pf3:50}
\end{align}
where \eqref{pf3:40} follows from the fact that: 1) $\mathsf{W}_{I_i}$ and $\mathsf{W}_{I_j}$ are functions of $\mathsf{X}_{A_i}$ and $\mathsf{X}_{A_j}$ respectively so we have $I_{\mathsf{X}}(A_i;A_j)=I_{\mathsf{X},\mathsf{W}}(A_i,I_i;A_j,I_j)$; and 2) $\mathsf{W}_{I_k}$ and $\mathsf{W}_{I_i\cup I_j}$ are functions of $\mathsf{X}_{A_k}$ and $\mathsf{X}_{A_i\cup A_j}$ respectively so we have $I_{\mathsf{X}}(A_k;A_i\cup A_j)=I_{\mathsf{X},\mathsf{W}}(A_k,I_k;A_i\cup A_j,I_i\cup I_j)$, and \eqref{pf3:50} follows from the fact that
\begin{align}
I_{\mathsf{X},\mathsf{W}}(A_i,I_i;A_j,I_j) & \geq I_{\mathsf{X},\mathsf{W}}(A_i\cap A_j,I_i\cap I_j;A_i\cap A_j,I_i\cap I_j)\\
& = H_{\mathsf{X},\mathsf{W}}(A_i\cap A_j,I_i\cap I_j)
\end{align}
and
\begin{align}
I_{\mathsf{X},\mathsf{W}}&(A_k,I_k;A_i\cup A_j,I_i\cup I_j)\notag\\
& \geq I_{\mathsf{X},\mathsf{W}}(A_k\cap(A_i\cup A_j),I_k\cap(I_i\cup I_j);A_k\cap(A_i\cup A_j),I_k\cap(I_i\cup I_j))\\
&=H_{\mathsf{X},\mathsf{W}}(A_k\cap(A_i\cup A_j),I_k\cap(I_i\cup I_j)).
\end{align}
Note that we trivially have
\begin{align}
H_{\mathsf{X},\mathsf{W}}(A_i\cap A_j,I_i\cap I_j) & \geq H_{\mathsf{W}}(I_i\cap I_j) \label{pf3:60}\\
\mbox{and} \quad H_{\mathsf{X},\mathsf{W}}(A_k\cap(A_i\cup A_j),I_k\cap(I_i\cup I_j)) & \geq 
H_{\mathsf{X}}(A_k\cap(A_i\cup A_j)). \label{pf3:70}
\end{align}
Substituting \eqref{pf3:60} and \eqref{pf3:70} into \eqref{pf3:50} gives
\begin{align}
H_{\mathsf{W}}(I_i\cup I_j \cup I_k)+H_{\mathsf{W}}(I_i\cap I_j) & \leq
H_{\mathsf{X}}(A_i)+H_{\mathsf{X}}(A_j)+H_{\mathsf{X}}(A_k)-H_{\mathsf{X}}(A_k\cap(A_i\cup A_j))\label{pf3:80}\\
& \leq H_{\mathsf{X}}(A_i)+H_{\mathsf{X}}(A_j)+H_{\mathsf{X}}(A_k\setminus(A_i\cup A_j))\label{pf3:90}\\
& \leq n\left(C(A_i)+C(A_j)+C(A_k\setminus(A_i\cup A_j))\right)\label{pf3:100}\\
& = n\left(C(A_i\cup A_j \cup A_k)+C(A_i\cap A_j)\right)\label{pf3:110}
\end{align}
where \eqref{pf3:90} follows from the independence bound
\begin{align}
H_{\mathsf{X}}(A_k) \leq H_{\mathsf{X}}(A_k\cap(A_i\cup A_j))+
H_{\mathsf{X}}(A_k\setminus(A_i\cup A_j));\label{pf3:115}
\end{align}
\eqref{pf3:100} follows from \eqref{eq:linkCap} for $A'=A_i$, $A_j$, and $A_k\setminus(A_i\cup A_j)$; and \eqref{pf3:110} follows from the fact that the capacity function $C(\cdot)$ is a modular function. Substituting 
\begin{align}
H_{\mathsf{W}}(I_i\cup I_j \cup I_k) & = nR(I_i\cup I_j \cup I_k)\label{pf3:116}\\
\mbox{and} \quad H_{\mathsf{W}}(I_i\cap I_j) & =R(I_i\cap I_j)\label{pf3:117}
\end{align}
into \eqref{pf3:110} and dividing both sides of the inequality by $n$ complete the proof of \eqref{eq:GCSB3a}. \hfill $\square$

We note here that if we had directly bounded from above the right-hand side of \eqref{pf3:10} by $nC(A_i\cup A_j \cup A_k)$ using the independence bound, it would have led to the standard cut-set bound
\begin{align}
R(I_i\cup I_j \cup I_k) \leq C(A_i\cup A_j \cup A_k).
\end{align}
But the use of the independence bound would have implied that all messages transmitted over $A_i\cup A_j \cup A_k$ are \emph{independent}, which may not be the case in the presence of multicast messages.

\noindent \emph{Proof of \eqref{eq:GCSB3b}.} Applying the two-way submodularity \eqref{eq:submod2} of the Shannon entropy with $\mathsf{Z}=(\mathsf{X},\mathsf{W})$, $S_1 = (A_i\cap A_j,I_i\cap I_j)$, and $S_2 = (A_k\cap(A_i\cup A_j),I_k\cap(I_i\cup I_j))$, we have
\begin{align}
H_{\mathsf{X},\mathsf{W}}&(A_i\cap A_j,I_i\cap I_j)+H_{\mathsf{X},\mathsf{W}}(A_k\cap(A_i\cup A_j),I_k\cap(I_i\cup I_j))\notag\\
&\geq H_{\mathsf{X},\mathsf{W}}(A_i\cap A_j\cap A_k,I_i\cap I_j\cap I_k)+\notag\\
&\hspace{15pt} H_{\mathsf{X},\mathsf{W}}((A_i\cap A_j)\cup(A_i\cap A_k)\cup(A_j\cap A_k),(I_i\cap I_j)\cap(I_i\cup I_k)\cap(I_j\cup I_k))\label{pf3:120}\\
&\geq H_{\mathsf{X}}(A_i\cap A_j\cap A_k)+
H_{\mathsf{W}}((I_i\cap I_j)\cup(I_i\cap I_k)\cup(I_j\cap I_k)).\label{pf3:130}
\end{align}
Substituting \eqref{pf3:130} into \eqref{pf3:50} gives
\begin{align}
H_{\mathsf{W}}&(I_i\cup I_j \cup I_k)+H_{\mathsf{W}}((I_i\cap I_j)\cup(I_j\cap I_k)\cup(I_k\cap I_i))\notag\\
& \leq H_{\mathsf{X}}(A_i)+H_{\mathsf{X}}(A_j)+H_{\mathsf{X}}(A_k)-H_{\mathsf{X}}(A_i\cap A_j\cap A_k)\label{pf3:140}\\
& \leq H_{\mathsf{X}}(A_i)+H_{\mathsf{X}}(A_j)+H_{\mathsf{X}}(A_k\setminus(A_i\cap A_j))\label{pf3:150}\\
& \leq n\left(C(A_i)+C(A_j)+C(A_k\setminus(A_i\cap A_j))\right)\label{pf3:160}\\
& = n\left(C(A_i\cup A_j \cup A_k)+C((A_i\cap A_j)\cup(A_i\cap A_k)\cup(A_j\cap A_k))\right)\label{pf3:170}
\end{align}
where \eqref{pf3:150} follows from the independence bound
\begin{align}
H_{\mathsf{X}}(A_k) \leq H_{\mathsf{X}}(A_k\cap(A_i\cap A_j))+
H_{\mathsf{X}}(A_k\setminus(A_i\cap A_j));
\end{align}
\eqref{pf3:160} follows from \eqref{eq:linkCap} for $A'=A_i$, $A_j$, and $A_k\setminus(A_i\cap A_j)$; and \eqref{pf3:170} follows from the fact that the capacity function $C(\cdot)$ is a modular function. Substituting \eqref{pf3:116} and
\begin{align}
H_{\mathsf{W}}((I_i\cap I_j)\cup(I_i\cap I_k)\cup(I_j\cap I_k)) &= nR((I_i\cap I_j)\cup(I_i\cap I_k)\cup(I_j\cap I_k))
\end{align}
into \eqref{pf3:170} and dividing both sides of the inequality by $n$ complete the proof of \eqref{eq:GCSB3b}. \hfill $\square$

\noindent \emph{Proof of \eqref{eq:GCSB3c}.} By the \emph{symmetry} among $i$, $j$, and $k$ in \eqref{pf3:50}, we have
\begin{align}
H_{\mathsf{W}}(I_i\cup I_j \cup I_k) & \leq H_{\mathsf{X}}(A_i)+\left(H_{\mathsf{X}}(A_k)-H_{\mathsf{X},\mathsf{W}}(A_i\cap A_k,I_i\cap I_k)\right)+\notag\\
& \hspace{15pt} \left(H_{\mathsf{X}}(A_j)-H_{\mathsf{X},\mathsf{W}}(A_j\cap(A_i\cup A_k),I_j\cap(I_i\cup I_k))\right).\label{pf3:180}
\end{align}
Also note that
\begin{align}
H_{\mathsf{W}}(I_i\cup I_j) & \leq H_{\mathsf{X}}(A_i\cup A_j)\label{pf3:190}\\
& = H_{\mathsf{X}}(A_i)+H_{\mathsf{X}}(A_j|A_i)\label{pf3:200}\\
& = H_{\mathsf{X}}(A_i)+\left(H_{\mathsf{X}}(A_j)-I_{\mathsf{X}}(A_i;A_j)\right)\label{pf3:210}\\
& = H_{\mathsf{X}}(A_i)+\left(H_{\mathsf{X}}(A_j)-I_{\mathsf{X},\mathsf{W}}(A_i,I_i;A_j,I_j)\right)\label{pf3:220}\\
& \leq H_{\mathsf{X}}(A_i)+\left(H_{\mathsf{X}}(A_j)-H_{\mathsf{X},\mathsf{W}}(A_i\cap A_j,I_i\cap I_j)\right).\label{pf3:230}
\end{align}
Adding \eqref{pf3:180} and \eqref{pf3:230} gives
\begin{align}
H_{\mathsf{W}}&(I_i\cup I_j \cup I_k)+H_{\mathsf{W}}(I_i\cup I_j)\notag\\
& \leq 2H_{\mathsf{X}}(A_i)+2H_{\mathsf{X}}(A_j)+H_{\mathsf{X}}(A_k)-H_{\mathsf{X},\mathsf{W}}(A_i\cap A_j,I_i\cap I_j)-\notag\\
& \hspace{15pt} H_{\mathsf{X},\mathsf{W}}(A_i\cap A_k,I_i\cap I_k)-H_{\mathsf{X},\mathsf{W}}(A_j\cap(A_i\cup A_k),I_j\cap(I_i\cup I_k)).\label{pf3:240}
\end{align}
Applying the two-way submodularity \eqref{eq:submod2} of the Shannon entropy with $\mathsf{Z}=(\mathsf{X},\mathsf{W})$, $S_1 = (A_i\cap A_j,I_i\cap I_j)$, and $S_2 = (A_i\cap A_k,I_i\cap I_k)$, we have
\begin{align}
H_{\mathsf{X},\mathsf{W}}&(A_i\cap A_j,I_i\cap I_j)+H_{\mathsf{X},\mathsf{W}}(A_i\cap A_k,I_i\cap I_k)\notag\\
&\geq H_{\mathsf{X},\mathsf{W}}(A_i\cap A_j\cap A_k,I_i\cap I_j\cap I_k)+H_{\mathsf{X},\mathsf{W}}(A_i\cap(A_j\cup A_k),I_i\cap(I_j\cup I_k))\\
&\geq H_{\mathsf{W}}(I_i\cap I_j\cap I_k)+H_{\mathsf{X}}(A_i\cap(A_j\cup A_k)).\label{pf3:260}
\end{align}
Note that we trivially have
\begin{align}
H_{\mathsf{X},\mathsf{W}}(A_j\cap(A_i\cup A_k),I_j\cap(I_i\cup I_k)) \geq H_{\mathsf{X}}(A_j\cap(A_i\cup A_k)).
\label{pf3:265}
\end{align}
Substituting \eqref{pf3:260} and \eqref{pf3:265} into \eqref{pf3:240}, we have
\begin{align}
H_{\mathsf{W}}&(I_i\cup I_j \cup I_k)+H_{\mathsf{W}}(I_i\cup I_j)+H_{\mathsf{W}}(I_i\cap I_j \cap I_k)\notag\\
& \leq 2H_{\mathsf{X}}(A_i)+2H_{\mathsf{X}}(A_j)+H_{\mathsf{X}}(A_k)-H_{\mathsf{X}}(A_i\cap(A_j\cup A_k))-H_{\mathsf{X}}(A_j\cap(A_i\cup A_k))\\
& \leq H_{\mathsf{X}}(A_i)+H_{\mathsf{X}}(A_j)+H_{\mathsf{X}}(A_k)+H_{\mathsf{X}}(A_i\setminus(A_j\cup A_k))+H_{\mathsf{X}}(A_j\setminus(A_i\cup A_k))\label{pf3:270}\\
& \leq n\left(C(A_i)+C(A_j)+C(A_k)+C(A_i\setminus(A_j\cup A_k))+C(A_j\setminus(A_i\cup A_k))\right)\label{pf3:280}\\
& = n\left(C(A_i\cup A_j\cup A_k)+C(A_i\cup A_j)+C(A_i\cap A_j\cap A_k)\right)\label{pf3:290}
\end{align}
where \eqref{pf3:270} follows from the independence bounds
\begin{align}
H_{\mathsf{X}}(A_i) & \leq H_{\mathsf{X}}(A_i\cap(A_j\cup A_k))+
H_{\mathsf{X}}(A_i\setminus(A_j\cup A_k))\label{pf3:295}\\
\mbox{and} \; H_{\mathsf{X}}(A_j) & \leq H_{\mathsf{X}}(A_j\cap(A_i\cup A_k))+
H_{\mathsf{X}}(A_j\setminus(A_i\cup A_k));\label{pf3:296}
\end{align}
\eqref{pf3:280} follows from \eqref{eq:linkCap} for $A'=A_i$, $A_j$, $A_k$, $A_i\setminus(A_j\cup A_k)$, and $A_j\setminus(A_i\cup A_k)$; and \eqref{pf3:290} follows from the fact that the capacity function $C(\cdot)$ is a modular function. Substituting \eqref{pf3:116},
\begin{align}
H_{\mathsf{W}}(I_i\cup I_j) &= nR(I_i\cup I_j) \label{pf3:297},\\
\mbox{and} \quad H_{\mathsf{W}}(I_i\cap I_j \cap I_k) &= nR(I_i\cap I_j \cap I_k) \label{pf3:298}
\end{align}
into \eqref{pf3:290} and dividing both sides of the inequality by $n$ complete the proof of \eqref{eq:GCSB3c}. \hfill $\square$

\noindent \emph{Proof of \eqref{eq:GCSB3d}.} Adding \eqref{pf3:50} and \eqref{pf3:180}, we have
\begin{align}
2H_{\mathsf{W}}(I_i\cup I_j \cup I_k) & \leq 2H_{\mathsf{X}}(A_i)+2H_{\mathsf{X}}(A_j)+2H_{\mathsf{X}}(A_k)-H_{\mathsf{X},\mathsf{W}}(A_i\cap A_j,I_i\cap I_j)-\notag\\
& \hspace{15pt} H_{\mathsf{X},\mathsf{W}}(A_i\cap A_k,I_i\cap I_k)-H_{\mathsf{X},\mathsf{W}}(A_j\cap(A_i\cup A_k),I_j\cap(I_i\cup I_k))-\notag\\
& \hspace{15pt} H_{\mathsf{X},\mathsf{W}}(A_k\cap(A_i\cup A_j),I_k\cap(I_i\cup I_j)).\label{pf3:300}
\end{align}
Note that we trivially have
\begin{align}
H_{\mathsf{X},\mathsf{W}}(A_k\cap(A_i\cup A_j),I_k\cap(I_i\cup I_j)) \geq H_{\mathsf{X}}(A_k\cap(A_i\cup A_j)).\label{pf3:301}
\end{align}
Substituting \eqref{pf3:260}, \eqref{pf3:265}, and \eqref{pf3:301} into \eqref{pf3:300}, we have
\begin{align}
2H_{\mathsf{W}}&(I_i\cup I_j \cup I_k)+H_{\mathsf{W}}(I_i\cap I_j \cap I_k)\notag\\
& \leq 2H_{\mathsf{X}}(A_i)+2H_{\mathsf{X}}(A_j)+2H_{\mathsf{X}}(A_k)-H_{\mathsf{X}}(A_i\cap(A_j\cup A_k))-\notag\\
& \hspace{15pt} H_{\mathsf{X}}(A_j\cap(A_i\cup A_k))-H_{\mathsf{X}}(A_k\cap(A_i\cup A_j))\\
& \leq H_{\mathsf{X}}(A_i)+H_{\mathsf{X}}(A_j)+H_{\mathsf{X}}(A_k)+H_{\mathsf{X}}(A_i\setminus(A_j\cup A_k))+\notag\\
& \hspace{15pt} H_{\mathsf{X}}(A_j\setminus(A_i\cup A_k))+H_{\mathsf{X}}(A_k\setminus(A_i\cup A_j))\label{pf3:310}\\
& \leq n\left(C(A_i)+C(A_j)+C(A_k)+C(A_i\setminus(A_j\cup A_k))+\right.\notag\\
& \hspace{15pt} \left.C(A_j\setminus(A_i\cup A_k))+C(A_k\setminus(A_i\cup A_j))\right)\label{pf3:320}\\
& = n\left(2C(A_i\cup A_j\cup A_k)+C(A_i\cap A_j\cap A_k)\right)\label{pf3:330}
\end{align}
where \eqref{pf3:310} follows from the independence bounds \eqref{pf3:115}, \eqref{pf3:295}, and \eqref{pf3:296}; \eqref{pf3:320} follows from \eqref{eq:linkCap} for $A'=A_i$, $A_j$, $A_k$, $A_i\setminus(A_j\cup A_k)$, $A_j\setminus(A_i\cup A_k)$ and $A_k\setminus(A_i\cup A_j)$; and \eqref{pf3:330} follows from the fact that the capacity function $C(\cdot)$ is a modular function. Substituting \eqref{pf3:116} and \eqref{pf3:298} into \eqref{pf3:330} and dividing both sides of the inequality by $n$ complete the proof of \eqref{eq:GCSB3d}. \hfill $\square$

We have thus completed the proof of Theorem~\ref{thm:GCSB3}.

\section{Generalized Cut-Set Bounds Relating $K$ Basic Cuts of the Network}\label{sec:GCSBK}
\subsection{Main Results}
\begin{theorem}\label{thm:GCSBK}
Consider a broadcast network with a collection of independent messages $\mathsf{W}_I$ collocated at the source node $s$ and $K \geq 3$ sink nodes $t_k$, $k=1,\ldots,K$. For any $k=1,\ldots,K$, let $\mathsf{W}_{I_k}$ be the intended messages for the sink node $t_k$, and let $A_k$ be a basic cut that separates the source node $s$ from the sink node $t_k$. Let $G$, $U$ and $T$ be nonempty subsets of $[K]$ such that 
\begin{align}
A^{(1)}(G) \supseteq A^{(1)}(U).
\label{eq:covCond1}
\end{align}
Let $Q$ be a subset of $\{2,\ldots,|U|\}$, and let $(r_q:q \in Q)$ be a sequence of integers from $[|T|]$ and such that 
\begin{align}
A^{(q)}(U) \subseteq A^{(r_q)}(T) \quad \mbox{and} \quad
I^{(q)}(U) \subseteq I^{(r_q)}(T), \quad \forall q \in Q.
\label{eq:covCond2}
\end{align}
We have
\begin{align}
R(I^{(1)}(G))+&\sum_{r \in \{2,\ldots,|U|\}\setminus Q}R(I^{(r)}(U))+\sum_{q \in Q}\sum_{r=1}^{r_q}\alpha_Q(q,r)R(I^{(r)}(T))\notag\\
& \leq C(A^{(1)}(G))+\sum_{r \in \{2,\ldots,|U|\}\setminus Q}C(A^{(r)}(U))+\sum_{q \in Q}\sum_{r=1}^{r_q}\alpha_Q(q,r)C(A^{(r)}(T))
\label{eq:GCSBK}
\end{align}
for any achievable rate tuple $R_I$, where 
\begin{align}
\alpha_Q(q,r) =
\left\{
\begin{array}{rl}
0, & \mbox{if} \; r\in Q\\
\frac{\prod_{\{p\in Q: p<r\}}(p-1)\prod_{\{p\in Q: r<p\leq r_q\}}p}{r_q\prod_{\{p\in Q: p\leq r_q\}}(p-1)}, & \mbox{if} \; r\notin Q
\end{array}
\right.
\label{eq:alpha}
\end{align}
for any $q \in Q$ and $r \in [r_q]$.
\end{theorem}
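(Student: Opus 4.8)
The plan is to run the same information-theoretic machinery as in the proof of Theorem~\ref{thm:GCSB3}, but to organize it around the submodular and modular results of Section~\ref{sec:Mod} so that the bookkeeping scales to arbitrary $K$. I would work on the combined ground set $A \sqcup I$ with the augmented family $\tilde{S}_k := (A_k,I_k)$, for which $\tilde{S}^{(r)}(\cdot) = (A^{(r)}(\cdot),I^{(r)}(\cdot))$ because the union/intersection operations respect the disjoint-union structure. The driving set function is the augmented entropy $f := H_{\mathsf{X},\mathsf{W}}$, which is submodular, while the capacity function $C$ is modular. Three elementary reductions do all the work: (i) since each $\mathsf{W}_{I_k}$ is a function of $\mathsf{X}_{A_k}$, the single-cut terms collapse, $f(\tilde{S}_k)=H_{\mathsf{X}}(A_k)\le nC(A_k)$; (ii) every augmented intersection term dominates its message part, $f(\tilde{S}^{(r)}(\cdot))\ge H_{\mathsf{W}}(I^{(r)}(\cdot))=nR(I^{(r)}(\cdot))$; and (iii) the cut-set bound \eqref{eq:perfectDec} gives $H_{\mathsf{W}}(I^{(1)}(G))\le H_{\mathsf{X}}(A^{(1)}(G))$. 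The rate terms on the left of \eqref{eq:GCSBK} are then produced by (ii), while the capacity terms on the right are produced by applying the \emph{modular} versions of the very same lemmas to $C$, so that the submodular inequality for $f$ and the modular equality for $C$ are assembled in parallel with identical coefficients.

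I would first dispose of the base case $Q=\emptyset$, which already exhibits the key cancellation and pins down the role of \eqref{eq:covCond1}. Applying the standard submodularity \eqref{eq:submodK} to $f$ over $U$, the left side is $\sum_{k\in U}f(\tilde{S}_k)\le n\sum_{k\in U}C(A_k)$ by (i), and the right side is $H_{\mathsf{X}}(A^{(1)}(U))+\sum_{r=2}^{|U|}f(\tilde{S}^{(r)}(U))$ after collapsing the $r=1$ term via (i). Bounding the $r\ge 2$ terms below by rates via (ii) and rewriting $\sum_{k\in U}C(A_k)=C(A^{(1)}(U))+\sum_{r=2}^{|U|}C(A^{(r)}(U))$ through modularity \eqref{eq:modK} yields $n\sum_{r=2}^{|U|}R(I^{(r)}(U))\le nC(A^{(1)}(U))+n\sum_{r=2}^{|U|}C(A^{(r)}(U))-H_{\mathsf{X}}(A^{(1)}(U))$. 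Separately, the super-sink estimate $nR(I^{(1)}(G))\le H_{\mathsf{X}}(A^{(1)}(G))\le H_{\mathsf{X}}(A^{(1)}(U))+nC(A^{(1)}(G)\setminus A^{(1)}(U))$ uses the independence bound together with \eqref{eq:covCond1} (which guarantees $A^{(1)}(G)\supseteq A^{(1)}(U)$). Adding the two estimates, the cut-entropy terms $H_{\mathsf{X}}(A^{(1)}(U))$ cancel and $C(A^{(1)}(U))+C(A^{(1)}(G)\setminus A^{(1)}(U))=C(A^{(1)}(G))$ by modularity, giving precisely \eqref{eq:GCSBK} with all $\alpha_Q$ terms absent.

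For $Q\neq\emptyset$, I would replace, one element $q\in Q$ at a time, the use of standard submodularity at level $q$ by an application of Lemma~\ref{lemma:2} to $f$ with the pair $(U,T)$. Its hypothesis $\tilde{S}^{(q)}(U)\subseteq\tilde{S}^{(r_q)}(T)$ is exactly the conjunction of the two containments in \eqref{eq:covCond2} on the combined ground set. Lemma~\ref{lemma:2} trades the term $r_qf(\tilde{S}^{(q)}(U))$ together with the base terms $\sum_r f(\tilde{S}_{t_r})$ for the refined family $\sum_{r=1}^{r_q}f(\tilde{S}^{(r)}(T))$ plus leftover intersection terms; reducing these via (i) and (ii) is what introduces the $C(A^{(r)}(T))$ and $R(I^{(r)}(T))$ contributions carrying the weights $\alpha_Q$. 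The leftover terms $f(\tilde{S}_{t_r}\cap(\cdots))$ on the right of \eqref{eq:3a} are precisely what must telescope, or be discarded by monotonicity, as successive elements of $Q$ are processed, and the modular identity \eqref{eq:3b} applied to $C$ produces the matching capacity combination so that the two sides remain aligned.

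The main obstacle is the \emph{coefficient bookkeeping}: verifying that iterating Lemma~\ref{lemma:2} over all $q\in Q$ — each application contributing a factor $1/r_q$ and redistributing weight among the levels $r\in[r_q]$ — reproduces exactly the closed form \eqref{eq:alpha}, including the vanishing $\alpha_Q(q,r)=0$ for $r\in Q$ and the coupling between distinct elements of $Q$ encoded by the products $\prod_{p\in Q}$. I expect this to require either a careful induction on $|Q|$, checking that peeling off one $q$ transforms $\alpha_{Q\setminus\{q\}}$ into $\alpha_Q$, or a direct combinatorial evaluation of the accumulated weight after a single global combination. A secondary difficulty is keeping the leftover intersection terms from different applications of Lemma~\ref{lemma:2} mutually compatible, so that only the advertised $R(I^{(r)}(T))$ and $R(I^{(r)}(U))$ survive on the rate side; here Corollary~\ref{cor:1}, with its common offset $S_0$, is the natural device for absorbing the partial-union sets $S^{(r'+1)}([r])$ that arise and for keeping the chain-rule expansion over $U$ organized.
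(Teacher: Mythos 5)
Your base case $Q=\emptyset$ is correct and essentially identical to the paper's: standard multiway submodularity \eqref{eq:submodK} applied to the augmented entropy $H_{\mathsf{X},\mathsf{W}}$ over $U$, the collapses $H_{\mathsf{X},\mathsf{W}}(A_k,I_k)=H_{\mathsf{X}}(A_k)$ and $H_{\mathsf{X},\mathsf{W}}(A^{(1)}(U),I^{(1)}(U))=H_{\mathsf{X}}(A^{(1)}(U))$, the independence bound with \eqref{eq:covCond1}, and modularity of $C$; the cancellation of $H_{\mathsf{X}}(A^{(1)}(U))$ is exactly the paper's step leading to \eqref{eq:pf-GCSBK805}. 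Likewise, your reduction of each application of Lemma~\ref{lemma:2} to a clean inequality relating the entropies of $(A^{(r)}(T),I^{(r)}(T))$ to capacities matches the paper's chain \eqref{eq:pf-GCSBK1100}--\eqref{eq:pf-GCSBK1600}. Note, by the way, that this makes your ``secondary difficulty'' moot: the leftover intersection terms never interact across applications, because each application of Lemma~\ref{lemma:2} is immediately self-contained, its leftover terms being absorbed into capacities via the independence bound, the link constraints, and modularity of $C$ (Corollary~\ref{cor:1} enters inside the proof of Lemma~\ref{lemma:2} in the appendix, not at this level).

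The genuine gap is in how you generate the coefficients when $Q\neq\emptyset$. You propose one application of Lemma~\ref{lemma:2} per $q\in Q$, always with the pair $(U,T)$, each ``contributing a factor $1/r_q$ and redistributing weight among the levels $r\in[r_q]$.'' But such an application distributes weight \emph{uniformly}: after normalizing so that $H_{\mathsf{X},\mathsf{W}}(A^{(q)}(U),I^{(q)}(U))$ carries coefficient $-1$ (as needed to cancel the $+1$ coming from the base estimate \eqref{eq:pf-GCSBK805}), level $r$ of $T$ receives weight exactly $1/r_q$, so summing over $q\in Q$ puts weight $\sum_{\{q\in Q:\,r_q\geq r\}}1/r_q$ on level $r$ --- strictly positive at every $r\leq\max_{q}r_q$, in particular at $r\in Q$. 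No induction on $|Q|$ or combinatorial re-evaluation of these instances alone can make those weights vanish, so your scheme proves a bound with coefficients different from \eqref{eq:alpha} and does not prove the theorem as stated. The missing idea is a second, ``diagonal'' family of instances: for each $q_i\in Q$, the paper also applies Lemma~\ref{lemma:2} \emph{within} $T$ itself (both sets equal to $T$, with $q'=r_{q'}=q_j$) for every $q_j\in Q$ satisfying $q_j\leq r_{q_i}$, yielding \eqref{eq:pf-GCSBK1700}; it then combines these diagonal instances with the cross instance for $q_i$ using the positive integer weights $n_Q(q_i,q_j)$ and $d_Q(q_i)$, and a telescoping-product identity shows that the accumulated weight at level $r$ equals $r_{q_i}d_Q(q_i)\alpha_Q(q_i,r)$ --- zero precisely at $r\in Q$ and matching the products in \eqref{eq:alpha} elsewhere. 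Without these diagonal applications and this weighted combination, your plan cannot reach \eqref{eq:GCSBK} with the advertised $\alpha_Q$.
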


Note that the generalized cut-set bound \eqref{eq:GCSBK} involves a number of parameters: $G$, $U$, $T$, $Q$, and $(r_q:q \in Q)$. Specifying these parameters to certain choices will lead to potentially weaker but more applicable generalized cut-set bounds. More specifically, let $G=U=T$ and $r_q=q-1$ for any $q \in Q$. By the ordering in \eqref{eq:myOrd1}, the condition in \eqref{eq:covCond2} is satisfied (the condition in \eqref{eq:covCond1} holds trivially with an equality). Thus, by Theorem~\ref{thm:GCSBK} we have
\begin{align}
\sum_{r \in [|U|]\setminus Q}R(I^{(r)}(U))+&\sum_{q \in Q}\sum_{r=1}^{q-1}\alpha_Q(q,r)R(I^{(r)}(U))\notag\\
& \leq \sum_{r \in [|U|]\setminus Q}C(A^{(r)}(U))+\sum_{q \in Q}\sum_{r=1}^{q-1}\alpha_Q(q,r)C(A^{(r)}(U))
\label{eq:GCSBK2}
\end{align}
for any achievable rate tuple $R_I$, where
\begin{align}
\alpha_Q(q,r) =
\left\{
\begin{array}{rl}
0, & \mbox{if} \; r\in Q\\
\frac{\prod_{\{p\in Q: p<r\}}(p-1)\prod_{\{p\in Q: r<p\leq q-1\}}p}{\prod_{\{p\in Q: p\leq q\}}(p-1)}, & \mbox{if} \; r\notin Q
\end{array}
\right.
\label{eq:alpha2}
\end{align}
for any $q \in Q$ and $r \in [q-1]$. A proper simplification of \eqref{eq:GCSBK2} leads to the following corollary. See Appendix~\ref{app:pf-cor4} for the details of the simplification procedure.

\begin{coro}\label{cor:GCSBK2}
Consider a broadcast network with a collection of independent messages $\mathsf{W}_I$ collocated at the source node $s$ and $K \geq 3$ sink nodes $t_k$, $k=1,\ldots,K$. For any $k=1,\ldots,K$, let $\mathsf{W}_{I_k}$ be the intended messages for the sink node $t_k$, and let $A_k$ be a basic cut that separates the source node $s$ from the sink node $t_k$. Let $U$ be a nonempty subset of $[K]$, and let $Q$ be a subset of $\{2,\ldots,|U|\}$. We have
\begin{align}
\sum_{r=1}^{|U|}\beta_Q(r)R(I^{(r)}(U)) &\le \sum_{r=1}^{|U|}\beta_Q(r)C(A^{(r)}(U))
\label{eq:GCSBK3}
\end{align}
for any achievable rate tuple $R_I$, where $\beta_Q(r)=1$ for any $r\in[|U|]$ if $Q=\emptyset$, and
\begin{align}
\beta_Q(r)=\left\{
\begin{array}{rl}
0, & \mbox{if} \; r\in Q\\
\prod_{\{q\in Q:q<r\}}(q-1)\prod_{\{q\in Q:q>r\}}q, & \mbox{if} \; r\notin Q
\end{array}
\right.
\label{eq:beta}
\end{align}
for any $r\in[|U|]$ if $Q \neq \emptyset$.
\end{coro}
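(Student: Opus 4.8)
The plan is to recognize that the only gap between \eqref{eq:GCSBK2} and the claimed \eqref{eq:GCSBK3} is bookkeeping of coefficients: both inequalities are weighted comparisons of the \emph{same} quantities $R(I^{(r)}(U))$ and $C(A^{(r)}(U))$, with identical weights on the two sides. So I would collect, for each $r\in[|U|]$, the total weight $\gamma_Q(r)$ multiplying $R(I^{(r)}(U))$ (equivalently $C(A^{(r)}(U))$) on the left-hand side of \eqref{eq:GCSBK2}, and prove the single identity
\begin{align}
\gamma_Q(r)=\frac{\beta_Q(r)}{\prod_{q\in Q}(q-1)},\qquad r\in[|U|].
\label{eq:propplan}
\end{align}
Since $Q\subseteq\{2,\ldots,|U|\}$ forces $q-1\ge 1$, the factor $D:=\prod_{q\in Q}(q-1)$ is a strictly positive constant; multiplying \eqref{eq:GCSBK2} through by $D$ then turns it verbatim into \eqref{eq:GCSBK3}.

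To read off $\gamma_Q(r)$, I would regroup the double sum in \eqref{eq:GCSBK2} by the value of $r$, noting that the constraint $1\le r\le q-1$ is the same as $q>r$. For $r\notin Q$ this gives $\gamma_Q(r)=1+\sum_{q\in Q,\,q>r}\alpha_Q(q,r)$, while for $r\in Q$ the first sum runs over $[|U|]\setminus Q$ and omits $r$, and each $\alpha_Q(q,r)$ vanishes by \eqref{eq:alpha2}, so $\gamma_Q(r)=0=\beta_Q(r)$ and \eqref{eq:propplan} holds trivially. The degenerate case $Q=\emptyset$ is also immediate, since then \eqref{eq:GCSBK2} already reads $\sum_r R(I^{(r)}(U))\le\sum_r C(A^{(r)}(U))$ and $\beta_Q\equiv 1$. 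Thus the whole content is the case $r\notin Q$, $Q\neq\emptyset$.

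The heart of the argument is an explicit evaluation of $\gamma_Q(r)$ by telescoping. First I would simplify $\alpha_Q(q,r)$ from \eqref{eq:alpha2}: for $r\notin Q$ and $q\in Q$ with $q>r$, the factor $\prod_{p\in Q,\,p<r}(p-1)$ is common to numerator and denominator and cancels, and pulling the $(q-1)$ factor out of the remaining denominator $\prod_{p\in Q,\,r<p\le q}(p-1)$ leaves the compact form
\begin{align}
\alpha_Q(q,r)=\frac{1}{q-1}\prod_{p\in Q,\,r<p<q}\frac{p}{p-1}.
\label{eq:alphacompact}
\end{align}
Listing the elements of $Q$ that exceed $r$ as $s_1<\cdots<s_\ell$ and writing $P_j:=\prod_{i=1}^{j}\frac{s_i}{s_i-1}$ with $P_0:=1$, identity \eqref{eq:alphacompact} says exactly $\alpha_Q(s_j,r)=P_{j-1}/(s_j-1)=P_j-P_{j-1}$, so the sum collapses:
\begin{align}
\gamma_Q(r)=1+\sum_{j=1}^{\ell}(P_j-P_{j-1})=P_\ell=\prod_{q\in Q,\,q>r}\frac{q}{q-1}.
\label{eq:gammaval}
\end{align}

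Finally, factoring $\prod_{q\in Q}(q-1)=\prod_{q\in Q,\,q<r}(q-1)\cdot\prod_{q\in Q,\,q>r}(q-1)$ (legitimate because $r\notin Q$) and comparing with \eqref{eq:beta} gives $\beta_Q(r)=\prod_{q\in Q,\,q<r}(q-1)\prod_{q\in Q,\,q>r}q=D\cdot\prod_{q\in Q,\,q>r}\frac{q}{q-1}=D\,\gamma_Q(r)$, which is precisely \eqref{eq:propplan}. I expect the main obstacle to be purely this algebraic reduction of $\alpha_Q$: spotting the cancellation and the exact factorization in \eqref{eq:alphacompact} that exposes the telescoping $\alpha_Q(s_j,r)=P_j-P_{j-1}$. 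Once $\alpha_Q$ is in that form the remainder is routine, and the modular/submodular machinery of Section~\ref{sec:Mod} plays no role here — the corollary is a pure weight-simplification of the already-established bound \eqref{eq:GCSBK2}.
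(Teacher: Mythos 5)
Your proof is correct and follows essentially the same route as the paper's own proof in Appendix~C: both reduce the corollary to pure coefficient bookkeeping, evaluate the total weight of each $R(I^{(r)}(U))$ in \eqref{eq:GCSBK2} by telescoping the sum $\sum_{\{q\in Q:\,q>r\}}\alpha_Q(q,r)$, and then rescale by the positive constant $\prod_{q\in Q}(q-1)$. The only difference is cosmetic — you first cancel common factors to get the compact form of $\alpha_Q(q,r)$ and telescope multiplicatively via the ratios $P_j$, while the paper telescopes over a common denominator and treats the range $r>q_{|Q|}$ separately — so the two arguments are the same in substance.
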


The generalized cut-set bound \eqref{eq:GCSBK3} can be further specified by letting $Q=\{2,\ldots,m\}$ for $m=1,\ldots,|U|$ (note that $Q=\emptyset$ when $m=1$). For this particular choice of $Q$, we have
\begin{align}
\beta_Q(r) &=\left\{
\begin{array}{rl}
m!, &r=1\\
0, & r=2,\ldots,m\\
(m-1)!, &r=m+1,\ldots,|U|.
\end{array}
\right.
\label{eq:beta2}
\end{align}
Substituting \eqref{eq:beta2} into \eqref{eq:GCSBK3} immediately leads to the following corollary.

\begin{coro}\label{cor:GCSBK3}
Consider a broadcast network with a collection of independent messages $\mathsf{W}_I$ collocated at the source node $s$ and $K \geq 3$ sink nodes $t_k$, $k=1,\ldots,K$. For any $k=1,\ldots,K$, let $\mathsf{W}_{I_k}$ be the intended messages for the sink node $t_k$, and let $A_k$ be a basic cut that separates the source node $s$ from the sink node $t_k$. Let $U$ be a nonempty subset of $[K]$. We have
\begin{align}
mR(I^{(1)}(U))+\sum_{r=m+1}^{|U|}R(I^{(r)}(U)) &\le mC(A^{(1)}(U))+\sum_{r=m+1}^{|U|}C(A^{(r)}(U))
\label{eq:GCSBK4}
\end{align}
for any achievable rate tuple $R_I$ and any $m=1,\ldots,|U|$.
\end{coro}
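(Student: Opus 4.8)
The plan is to obtain this corollary as a direct specialization of Corollary~\ref{cor:GCSBK2}, choosing the parameter set to be the initial segment $Q=\{2,\ldots,m\}$ of the admissible range $\{2,\ldots,|U|\}$ and then evaluating the weights $\beta_Q(r)$ from \eqref{eq:beta} in closed form. Because Corollary~\ref{cor:GCSBK2} already furnishes the weighted inequality \eqref{eq:GCSBK3} for \emph{every} admissible $Q$, no further information-theoretic or submodularity input is required; the whole task reduces to a combinatorial evaluation of the weight formula, followed by the cancellation of a common factor.

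The key step is to establish \eqref{eq:beta2}, and I would do so by a three-case split according to the definition \eqref{eq:beta}. When $r=1$, the index $1$ is not in $Q$, the product over $\{q\in Q:q<1\}$ is empty (hence equals $1$), and the product over $\{q\in Q:q>1\}$ ranges over all of $Q$, yielding $\prod_{q=2}^{m}q=m!$. When $2\le r\le m$ we have $r\in Q$, so $\beta_Q(r)=0$ by definition. When $m+1\le r\le|U|$ the index $r$ is not in $Q$ and every element of $Q$ lies strictly below $r$, so now the product over $\{q\in Q:q>r\}$ is empty while the product over $\{q\in Q:q<r\}$ ranges over all of $Q$, yielding $\prod_{q=2}^{m}(q-1)=(m-1)!$. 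The only point demanding care is the empty-product convention; I would also check that the degenerate case $m=1$ (where $Q=\emptyset$ and $\beta_Q\equiv 1$) is consistent with these formulas, since $1!=0!=1$.

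With \eqref{eq:beta2} in hand the conclusion is immediate. Substituting these weights into \eqref{eq:GCSBK3} annihilates the terms indexed by $r=2,\ldots,m$ and leaves
\begin{align}
m!\,R(I^{(1)}(U))+(m-1)!\sum_{r=m+1}^{|U|}R(I^{(r)}(U)) \le m!\,C(A^{(1)}(U))+(m-1)!\sum_{r=m+1}^{|U|}C(A^{(r)}(U)).
\notag
\end{align}
Dividing both sides by the common positive factor $(m-1)!$ and using $m!/(m-1)!=m$ then produces exactly \eqref{eq:GCSBK4}. I anticipate no genuine obstacle here: the entire difficulty, such as it is, lies in the careful bookkeeping of the two products in \eqref{eq:beta}, after which the claimed bound drops out by a single division.
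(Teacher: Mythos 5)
Your proposal is correct and follows exactly the paper's own route: specializing Corollary~\ref{cor:GCSBK2} to $Q=\{2,\ldots,m\}$, evaluating $\beta_Q(r)$ case by case to recover \eqref{eq:beta2}, and dividing out the common factor $(m-1)!$. The only difference is that you spell out the empty-product bookkeeping and the $m=1$ degenerate case, which the paper leaves implicit.
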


Now, the generalized cut-set bound \eqref{eq:GCSB3d} can be recovered from Corollary~\ref{cor:GCSBK3} by setting $U=\{1,2,3\}$ and $m=2$ in \eqref{eq:GCSBK4}; the generalized cut-set bound \eqref{eq:GCSB3b} can be recovered from Corollary~\ref{cor:GCSBK2} by setting $U=\{1,2,3\}$ and $Q=\{3\}$ such that
\begin{align}
\beta_Q(r)&=\left\{
\begin{array}{rl}
3, &r=1,2\\
0, &r=3;
\end{array}
\right.
\end{align}
the generalized cut-set bound \eqref{eq:GCSB3a} can be recovered from Theorem~\ref{thm:GCSBK} by setting $G=\{i,j,k\}$, $U=\{i,j\}$ (so $A^{(1)}(G) \supseteq A^{(1)}(U)$) and $Q=\emptyset$; and finally, the generalized cut-set bound \eqref{eq:GCSB3c} can be recovered from Theorem~\ref{thm:GCSBK} by setting $G=U=\{i,j,k\}$ (so $A^{(1)}(G)=A^{(1)}(U)$), $T=\{i,j\}$, $Q=\{2\}$, and $r_2=1$ such that
\begin{equation}
\begin{array}{rl}
A^{(2)}(U) & = (A_i\cap A_j)\cup(A_i\cap A_k)\cup(A_j\cap A_k) \subseteq A_i\cup A_j=A^{(r_2)}(T),\\
I^{(2)}(U) & = (I_i\cap I_j)\cup(I_i\cap I_k)\cup(I_j\cap I_k) \subseteq I_i\cup I_j=I^{(r_2)}(U),
\end{array}
\end{equation}
and $\alpha_{Q}(2,1)=1$.

\subsection{Proof of Theorem~\ref{thm:GCSBK}}\label{sec:pf-GCSBK}
Let $(n,\{\mathsf{X}_a: a \in A\})$ be an \emph{admissible} code with block length $n$, where $\mathsf{X}_a$ is the message transmitted over the arc $a$. Similar to the proof of Theorem~\ref{thm:GCSB3}, we shall assume perfect recovery of the messages at each of the sink nodes. As such, for any nonempty subset $U \subseteq [K]$ the messages $\mathsf{W}_{\cup_{k \in U}I_k}$ must be \emph{functions} of the messages $\mathsf{X}_{\cup_{k \in U}A_k}$ transmitted over the $s$-$t_U$ cut $\cup_{k \in U}A_k$.

Let us first consider the case where $Q=\emptyset$. Note that
\begin{align}
H_{\mathsf{W}}&(I^{(1)}(G))\notag\\
& \leq H_{\mathsf{X}}(A^{(1)}(G))\label{eq:pf-GCSBK100}\\
& \leq H_{\mathsf{X}}(A^{(1)}(U))+H_{\mathsf{X}}(A^{(1)}(G)\setminus A^{(1)}(U))\label{eq:pf-GCSBK200}\\
& = H_{\mathsf{X},\mathsf{W}}(A^{(1)}(U),I^{(1)}(U))+H_{\mathsf{X}}(A^{(1)}(G)\setminus A^{(1)}(U))\label{eq:pf-GCSBK300}\\
&\leq \sum_{k\in U}H_{\mathsf{X},\mathsf{W}}(A_k,I_k)-\sum_{r=2}^{|U|}H_{\mathsf{X},\mathsf{W}}(A^{(r)}(U),I^{(r)}(U))+H_{\mathsf{X}}(A^{(1)}(G)\setminus A^{(1)}(U))\label{eq:pf-GCSBK400}\\
&= \sum_{k\in U}H_{\mathsf{X}}(A_k)-\sum_{r=2}^{|U|}H_{\mathsf{X},\mathsf{W}}(A^{(r)}(U),I^{(r)}(U))+H_{\mathsf{X}}(A^{(1)}(G)\setminus A^{(1)}(U))\label{eq:pf-GCSBK500}\\
&\leq n\left(\sum_{k\in U}C(A_k)+C(A^{(1)}(G)\setminus A^{(1)}(U))\right)-\sum_{r=2}^{|U|}H_{\mathsf{X},\mathsf{W}}(A^{(r)}(U),I^{(r)}(U))\label{eq:pf-GCSBK600}\\
&= n\left(\sum_{r=1}^{|U|}C(A^{(r)}(U))+C(A^{(1)}(G)\setminus A^{(1)}(U))\right)-\sum_{r=2}^{|U|}H_{\mathsf{W}}(I^{(r)}(U))\label{eq:pf-GCSBK700}\\
&= n\left(C(A^{(1)}(G))+\sum_{r=2}^{|U|}C(A^{(r)}(U))\right)-\sum_{r=2}^{|U|}H_{\mathsf{X},\mathsf{W}}(A^{(r)}(U),I^{(r)}(U))\label{eq:pf-GCSBK800}
 \end{align}
where \eqref{eq:pf-GCSBK100} and \eqref{eq:pf-GCSBK300} follow from the fact that the messages $\mathsf{W}_{I^{(1)}(U)}$ are functions of $\mathsf{X}_{A^{(1)}(U)}$; \eqref{eq:pf-GCSBK200} follows from the independence bound on entropy; \eqref{eq:pf-GCSBK400} follows from the standard multiway submodularity \eqref{eq:submodK}; \eqref{eq:pf-GCSBK500} follows from the fact that the messages $\mathsf{W}_{I_k}$ are functions of $\mathsf{X}_{A_k}$ so we have $H_{\mathsf{X},\mathsf{W}}(A_k,I_k)=H_{\mathsf{X}}(A_k)$ for any $k \in U$; \eqref{eq:pf-GCSBK600} follows from the link capacity constraints; \eqref{eq:pf-GCSBK700} follows from the fact that the capacity function $C(\cdot)$ is a modular function so we have $\sum_{k\in U}C(A_k)=\sum_{r=1}^{|U|}C(A^{(r)}(U))$; and \eqref{eq:pf-GCSBK800} follows from the fact that the capacity function $C(\cdot)$ is a modular function and the assumption \eqref{eq:covCond1} so we have $C(A^{(1)}(G))=C(A^{(1)}(U))+C(A^{(1)}(G)\setminus A^{(1)}(U))$. Rearranging the terms in \eqref{eq:pf-GCSBK800} gives
\begin{align}
H_{\mathsf{W}}(I^{(1)}(G))+\sum_{r=2}^{|U|}H_{\mathsf{X},\mathsf{W}}(A^{(r)}(U),I^{(r)}(U)) & \leq n\left(C(A^{(1)}(G))+\sum_{r=2}^{|U|}C(A^{(r)}(U))\right).\label{eq:pf-GCSBK805}
\end{align}
Further note that
\begin{align}
H_{\mathsf{W}}(I^{(1)}(G)) &= nR(I^{(1)}(G))\label{eq:pf-GCSBK900}\\
\mbox{and} \quad H_{\mathsf{X},\mathsf{W}}(A^{(r)}(U),I^{(r)}(U)) & \geq H_{\mathsf{W}}(I^{(r)}(U)) = nR(I^{(r)}(U)), \quad \forall r=2,\ldots,|U|.\label{eq:pf-GCSBK1000} 
\end{align}
Substituting \eqref{eq:pf-GCSBK900} and \eqref{eq:pf-GCSBK1000} into \eqref{eq:pf-GCSBK800} and dividing both sides of the inequality by $n$, we have
\begin{align}
R(I^{(1)}(G))+\sum_{r=2}^{|U|}R(I^{(r)}(U)) & \leq C(A^{(1)}(G))+\sum_{r=2}^{|U|}C(A^{(r)}(U))
\end{align}
for any achievable rate tuple $R_I$. This completes the proof of \eqref{eq:GCSBK} for $Q=\emptyset$.

Next, assume that $Q \neq \emptyset$. Write, without loss of generality, that $Q=\{q_1,\ldots,q_{|Q|}\}$ where
\begin{align}
2 \leq q_1 < q_2 < \cdots < q_{|Q|} \leq |U|.
\end{align}
By Lemma~\ref{lemma:2}, for any two integers $q'$ and $r_{q'}$ such that $1 \leq q' \leq |U|$, $1 \leq r_{q'} \leq |T|$, $A^{(q')}(U) \subseteq A^{(r_{q'})}(T)$, and $I^{(q')}(U) \subseteq I^{(r_{q'})}(T)$ we have
\begin{align}
\sum_{r=1}^{r_{q'}}&H_{\mathsf{X},\mathsf{W}}(A^{(r)}(T),I^{(r)}(T))-r_{q'}H_{\mathsf{X},\mathsf{W}}(A^{(q')}(U),I^{(q')}(U))\notag\\
& \leq \sum_{r=1}^{|T|}H_{\mathsf{X},\mathsf{W}}(A_{t_r},I_{t_r})-\sum_{r=1}^{r_{q'}}H_{\mathsf{X},\mathsf{W}}(A_{t_r}\cap A^{(q')}(U),I_{t_r}\cap I^{(q')}(U))-\notag\\
& \hspace{13pt} \sum_{r=r_{q'}+1}^{|T|}H_{\mathsf{X},\mathsf{W}}(A_{t_r}\cap(A^{(q')}(U)\cup A^{(r_{q'}+1)}(\{t_1,\ldots,t_r\})),\notag\\
& \hspace{50pt} I_{t_r}\cap(I^{(q')}(U)\cup I^{(r_{q'}+1)}(\{t_1,\ldots,t_r\})))\label{eq:pf-GCSBK1100} \\
& \leq \sum_{r=1}^{|T|}H_{\mathsf{X}}(A_{t_r})- \sum_{r=1}^{r_{q'}}H_{\mathsf{X}}(A_{t_r}\cap A^{(q')}(U))-\notag\\
& \hspace{13pt} \sum_{r=r_{q'}+1}^{|T|}H_{\mathsf{X}}(A_{t_r}\cap(A^{(q')}(U)\cup A^{(r_{q'}+1)}(\{t_1,\ldots,t_r\})))\label{eq:pf-GCSBK1200} \\
& \leq \sum_{r=1}^{r_{q'}}H_{\mathsf{X}}(A_{t_r}\setminus A^{(q')}(U))+\sum_{r=r_{q'}+1}^{|T|}H_{\mathsf{X}}(A_{t_r}\setminus(A^{(q')}(U)\cup A^{(r_{q'}+1)}(\{t_1,\ldots,t_r\}))\label{eq:pf-GCSBK1300} \\
& \leq n\left(\sum_{r=1}^{r_{q'}}C(A_{t_r}\setminus A^{(q')}(U))+\sum_{r=r_{q'}+1}^{|T|}C(A_{t_r}\setminus(A^{(q')}(U)\cup A^{(r_{q'}+1)}(\{t_1,\ldots,t_r\}))\right)\label{eq:pf-GCSBK1400}\\
& = n\left(\sum_{r=1}^{|T|}C(A_{t_r})- \sum_{r=1}^{r_{q'}}C(A_{t_r}\cap A^{(q')}(U))-\right.\notag\\
& \hspace{13pt} \left.\sum_{r=r_{q'}+1}^{|T|}C(A_{t_r}\cap(A^{(q')}(U)\cup A^{(r_{q'}+1)}(\{t_1,\ldots,t_r\})))\right)\label{eq:pf-GCSBK1500} \\
& = n\left(\sum_{r=1}^{r_{q'}}C(A^{(r)}(T))-r_{q'}C(A^{(q')}(U))\right)\label{eq:pf-GCSBK1600} 
\end{align}
where \eqref{eq:pf-GCSBK1200} follows from the fact that the messages $\mathsf{W}_{I_{t_r}}$ are functions of $\mathsf{X}_{A_{t_r}}$ so we have $H_{\mathsf{X},\mathsf{W}}(A_{t_r},I_{t_r})=H_{\mathsf{X}}(A_{t_r})$ for any $r \in [|U|]$ and the trivial inequalities 
\begin{align}
H_{\mathsf{X},\mathsf{W}}&(A_{t_r}\cap A^{(q')}(U),I_{t_r}\cap I^{(q')}(U)) \geq H_{\mathsf{X}}(A_{t_r}\cap A^{(q')}(U)), \quad \forall r\in[r_{q'}]\\
\mbox{and} \quad H_{\mathsf{X},\mathsf{W}}&(A_{t_r}\cap(A^{(q')}(U)\cup A^{(r_{q'}+1)}(\{t_1,\ldots,t_r\})),I_{t_r}\cap(I^{(q')}(U)\cup I^{(r_{q'}+1)}(\{t_1,\ldots,t_r\})))\notag\\
& \geq H_{\mathsf{X}}(A_{t_r}\cap(A^{(q')}(U)\cup A^{(r_{q'}+1)}(\{t_1,\ldots,t_r\})));
\end{align}
\eqref{eq:pf-GCSBK1300} follows from the independence bound on entropy; \eqref{eq:pf-GCSBK1400} follows from the link-capacity constraints; and \eqref{eq:pf-GCSBK1500} and \eqref{eq:pf-GCSBK1600} follow from the fact that the capacity function $C(\cdot)$ is a modular function. Letting $r_{q'}=q'=q_j$ and $U=T$ in \eqref{eq:pf-GCSBK1600}, we have
\begin{align}
\sum_{r=1}^{q_j}H_{\mathsf{X},\mathsf{W}}(A^{(r)}(T),I^{(r)}(T))-&q_jH_{\mathsf{X},\mathsf{W}}(A^{(q_j)}(T),I^{(q_j)}(T))\notag\\
& \leq n\left(\sum_{r=1}^{q_j}C(A^{(r)}(T))-q_jC(A^{(q_j)}(T))\right).\label{eq:pf-GCSBK1700}
\end{align}
Let
\begin{align}
n_Q(q,r) &:= \prod_{\{p \in Q: p<r\}}(p-1) \prod_{\{p \in Q: r<p\leq r_{q}\}}p\\
\mbox{and} \quad d_Q(q) &:= \prod_{\{p \in Q: p \leq r_{q}\}}(p-1)
\end{align}
for any $q \in Q$ and $r \in [r_q]$, and let $Q_i:=\{q\in Q:q \leq r_{q_i}\}$. Note that $n_Q(q,r)$ and $d_Q(q)$ are always positive. Multiplying both sides of \eqref{eq:pf-GCSBK1700} by $n_Q(q_i,q_j)$ and then summing over all $q_j \in Q_i$, we have
\begin{align}
\sum_{j=1}^{|Q_i|}n_Q(q_i,q_j)&\left(\sum_{r=1}^{q_j}H_{\mathsf{X},\mathsf{W}}(A^{(r)}(T),I^{(r)}(T))-q_jH_{\mathsf{X},\mathsf{W}}(A^{(q_j)}(T),I^{(q_j)}(T))\right)\notag\\
\leq n&\left(\sum_{j=1}^{|Q_i|}n_Q(q_i,q_j)\left(\sum_{r=1}^{q_j}C(A^{(r)}(T))-q_jC(A^{(q_j)}(T))\right)\right).\label{eq:pf-GCSBK1800}
\end{align}
Note that
\begin{align}
\sum_{j=1}^{|Q_i|}n(q_i,q_j)\sum_{r=1}^{q_j}H_{\mathsf{X},\mathsf{W}}(A^{(r)}(T),I^{(r)}(T)) &= \sum_{r=1}^{q_{|Q_i|}}\left(\sum_{j=j(r)}^{|Q_i|}n(q_i,q_j)\right)H_{\mathsf{X},\mathsf{W}}(A^{(r)}(T),I^{(r)}(T))
\end{align}
where
\begin{align}
j(r) := \left\{
\begin{array}{rl}
1, & \mbox{for} \; 0 < r \leq q_1\\
2, & \mbox{for} \; q_1 < r \leq q_2\\
\vdots\\
|Q_i|, & \mbox{for} \; q_{|Q_i|-1} < r \leq q_{|Q_i|}.
\end{array}
\right.
\label{eq:pf-GCSBK1805}
\end{align}
We can thus rewrite \eqref{eq:pf-GCSBK1800} as
\begin{align}
\sum_{r=1}^{q_{|Q_i|}}&\left(\sum_{j=j(r)}^{|Q_i|}n(q_i,q_j)-rn_Q(q_i,r)1_{\{r \in Q_i\}}\right)H_{\mathsf{X},\mathsf{W}}(A^{(r)}(T),I^{(r)}(T))\notag\\
&\leq n\left(\sum_{r=1}^{q_{|Q_i|}}\left(\sum_{j=j(r)}^{|Q_i|}n(q_i,q_j)-rn_Q(q_i,r)1_{\{r \in Q_i\}}\right)C(A^{(r)}(T)\right).\label{eq:pf-GCSBK1900}
\end{align}
Furthermore, letting $q'=q_i$ and $r_{q'}=r_{q_i}$ in \eqref{eq:pf-GCSBK1600} and multiplying both sides of the inequality by $d_Q(q_i)$, we have
\begin{align}
\sum_{r=1}^{r_{q_i}}&d_Q(q_i)H_{\mathsf{X},\mathsf{W}}(A^{(r)}(T),I^{(r)}(T))-r_{q_i}d_Q(q_i)H_{\mathsf{X},\mathsf{W}}(A^{(q_i)}(U),I^{(q_i)}(U))\notag\\
& \leq n\left(\sum_{r=1}^{r_{q_i}}d_Q(q_i)C(A^{(r)}(T))-r_{q_i}d_Q(q_i)C(A^{(q_i)}(U))\right).\label{eq:pf-GCSBK2000}
\end{align}
Adding \eqref{eq:pf-GCSBK1900} and \eqref{eq:pf-GCSBK2000} gives
\begin{align}
\sum_{r=1}^{r_{q_i}}&n'_Q(q_i,r)H_{\mathsf{X},\mathsf{W}}(A^{(r)}(T),I^{(r)}(T))-r_{q_i}d_Q(q_i)H_{\mathsf{X},\mathsf{W}}(A^{(q_i)}(U),I^{(q_i)}(U))\notag\\
& \leq n\left(\sum_{r=1}^{r_{q_i}}n'_Q(q_i,r)C(A^{(r)}(T))-r_{q_i}d_Q(q_i)C(A^{(q_i)}(U))\right)\label{eq:pf-GCSBK2010}
\end{align}
where
\begin{align}
n'_Q(q_i,r)=\left\{
\begin{array}{rl}
\sum_{j=j(r)}^{|Q_i|}n_Q(q_i,q_j)-rn_Q(q_i,r)1_{\{r \in Q_i\}}+d_Q(q_i),  & \mbox{if} \; 1 \leq r \leq q_{|Q_i|}\\
d_Q(q_i), & \mbox{if} \; q_{|Q_i|} < r \leq  r_{q_i}.
\end{array}
\right.
\end{align}
By \eqref{eq:pf-GCSBK1805}, when $q_{m-1} < r \leq q_m$ for some $m=1,\ldots,|Q_i|$ ($q_0:=0$ for convenience), we have $j(r)=m$ and hence
\begin{align}
\sum_{j=j(r)}^{|Q_i|}n_Q(q_i,q_j) &= \sum_{j=m}^{|Q_i|}n_Q(q_i,q_j)\\
&= \sum_{j=m}^{|Q_i|}\left(\prod_{l=1}^{j-1}(q_l-1)\prod_{l=j+1}^{|Q_i|}q_l\right)\\
&= \sum_{j=m}^{|Q_i|}\left(\prod_{l=1}^{j-1}(q_l-1)\prod_{l=j}^{|Q_i|}q_l-\prod_{l=1}^{j}(q_l-1)\prod_{l=j+1}^{|Q_i|}q_l\right)\\
&= \sum_{j=m}^{|Q_i|}\left(\prod_{l=1}^{j-1}(q_l-1)\prod_{l=j}^{|Q_i|}q_l\right)-\sum_{j=m+1}^{|Q_i|+1}\left(\prod_{l=1}^{j-1}(q_l-1)\prod_{l=j}^{|Q_i|}q_l\right)\\
&= \prod_{l=1}^{m-1}(q_l-1)\prod_{l=m}^{|Q_i|}q_l-\prod_{l=1}^{|Q_i|}(q_l-1)\\
&= \prod_{l=1}^{m-1}(q_l-1)\prod_{l=m}^{|Q_i|}q_l-d_Q(q_i).
\end{align}
Therefore, when $r= q_m$ for some $m \in \left[|Q_i|\right]$ we have
\begin{align}
\sum_{j=m}^{|Q_i|}n_Q(q_i,q_j)-q_mn_Q(q_i,q_m)+d_Q(q_i) &= \prod_{l=1}^{m-1}(q_l-1)\prod_{l=m}^{|Q_i|}q_l-q_m\prod_{l=1}^{m-1}(q_l-1)\prod_{l=m+1}^{|Q_i|}q_l\\
&= 0\label{eq:pf-GCSBK2020};
\end{align}
when $q_{m-1} < r< q_m$ for some $m \in \left[|Q_i|\right]$ we have
\begin{align}
\sum_{j=m}^{|Q_i|}n_Q(q_i,q_j)+d_Q(q_i) &= \prod_{l=1}^{m-1}(q_l-1)\prod_{l=m}^{|Q_i|}q_l\\
&=r_{q_i}d_Q(q_i)\alpha_{Q}(q_i,r);\label{eq:pf-GCSBK2030}
\end{align}
and when $q_{|Q_i|} < r \leq  r_{q_i}$ we have $\alpha_{Q}(q_i,r)=1/r_{q_i}$ and hence
\begin{align}
d_Q(q_i) &= r_{q_i}d_Q(q_i)\alpha_{Q}(q_i,r).\label{eq:pf-GCSBK2040}
\end{align}
Combining \eqref{eq:pf-GCSBK2020}, \eqref{eq:pf-GCSBK2030}, and \eqref{eq:pf-GCSBK2040}, we conclude that
\begin{align}
n'_{Q}(q_i,r) &= r_{q_i}d_Q(q_i)\alpha_{Q}(q_i,r), \quad \forall r \in [r_{q_i}].
\end{align}
Dividing both sides of \eqref{eq:pf-GCSBK2010} by $r_{q_i}d(q_i)$ and then summing over all $q_i \in Q$, we have
\begin{align}
\sum_{q \in Q}\sum_{r=1}^{r_{q}}&\alpha_Q(q,r)H_{\mathsf{X},\mathsf{W}}(A^{(r)}(T),I^{(r)}(T))-\sum_{q \in Q}H_{\mathsf{X},\mathsf{W}}(A^{(q)}(U),I^{(q)}(U))\notag\\
&\leq n\left(\sum_{q \in Q}\sum_{r=1}^{r_{q}}\alpha_Q(q,r)C(A^{(r)}(T))- \sum_{q \in Q}C(A^{(q)}(U))\right).\label{eq:pf-GCSBK2050}
\end{align}
Adding \eqref{eq:pf-GCSBK805} and \eqref{eq:pf-GCSBK2050}, we have
\begin{align}
H_{\mathsf{W}}&(I^{(1)}(G))+\sum_{r\in\{2,\ldots,|U|\}\setminus Q}H_{\mathsf{X},\mathsf{W}}(A^{(r)}(U),I^{(r)}(U))+\sum_{q \in Q}\sum_{r=1}^{r_{q}}\alpha_Q(q,r)H_{\mathsf{X},\mathsf{W}}(A^{(r)}(T),I^{(r)}(T))\notag\\
& \leq n\left(C(A^{(1)}(G))+\sum_{r\in\{2,\ldots,|U|\}\setminus Q}C(A^{(r)}(U))+\sum_{q \in Q}\sum_{r=1}^{r_{q}}\alpha_Q(q,r)C(A^{(r)}(T))\right).\label{eq:pf-GCSBK2060}
\end{align}
Note that we trivially have
\begin{align}
H_{\mathsf{X},\mathsf{W}}(A^{(r)}(T),I^{(r)}(T)) & \geq H_{\mathsf{W}}(I^{(r)}(T)) = nR(I^{(r)}(T)), \quad \forall q \in Q \; \mbox{and} \; r \in [r_q].
\label{eq:pf-GCSBK2070}
\end{align}
Substituting \eqref{eq:pf-GCSBK900}, \eqref{eq:pf-GCSBK1000}, and \eqref{eq:pf-GCSBK2070} into \eqref{eq:pf-GCSBK2060} and dividing both sides of the inequality by $n$ complete the proof of \eqref{eq:GCSBK} for $Q \neq \emptyset$.

We have thus completed the proof of Theorem~\ref{thm:GCSBK}.

\section{Applications to Combination Networks}\label{sec:CN}

\begin{figure}[!t]
\centering
\includegraphics[width=0.75\linewidth,draft=false]{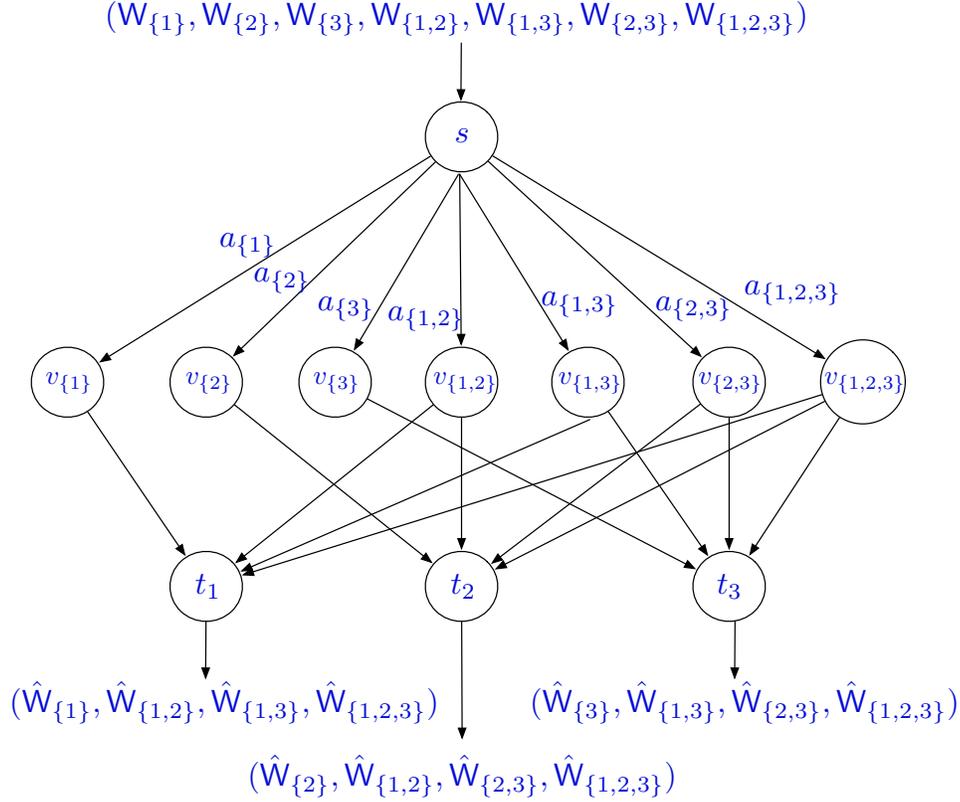}
\caption{Illustration of the general combination network with $K=3$ sink nodes and a complete message set.}
\label{fig:CN3}
\end{figure}

To demonstrate the tightness of the generalized cut-set bounds, let us consider a special class of broadcast networks known as \emph{combination} networks \cite{Nga-ITW04}. A combination network is a broadcast network that consists of three layers of nodes (see Figure~\ref{fig:CN3} for an illustration). The top layer consists of a single source node $s$, and the bottom layer consists of $K$ sink nodes $t_k$, $k=1,\ldots,K$. The middle layer consists of $2^K-1$ intermediate nodes, each connecting to the source node $s$ and a nonempty subset of sink nodes. While the links from the source node $s$ to the intermediate nodes may have finite capacity, the links from the intermediate nodes to the sink nodes are all assumed to have \emph{infinite} capacity. More specifically, denote by $v_U$ the intermediate node that connects to the nonempty subset $U$ of sink nodes and $a_U$ the link that connects the source node $s$ to the intermediate node $v_U$. The link capacity for $a_U$ is denoted by $C_U$. Note that when $C_U=0$, the intermediate node $v_U$ can be effectively removed from the network. By construction, the only interesting combinatorial structure for combination networks is cut. Therefore, combination networks provide an ideal set of problems to understand the strength and the limitations of the generalized cut-set bounds. 

In Figure~\ref{fig:CN3} we illustrate a general combination network with $K=3$ sink nodes and a general message set that consists of a total of seven independent messages 
$$(\mathsf{W}_{\{1\}},\mathsf{W}_{\{2\}},\mathsf{W}_{\{3\}},\mathsf{W}_{\{1,2\}},\mathsf{W}_{\{1,3\}},\mathsf{W}_{\{2,3\}},\mathsf{W}_{\{1,2,3\}}),$$
where the message $\mathsf{W}_U$, $U \subseteq \{1,2,3\}$, is intended for all sink nodes $t_k$, $k \in U$. This network coding problem was first introduced and solved by Grokop and Tse \cite{Gro-AP08} in the context of characterizing the \emph{latency} capacity region \cite{Tia-IT11} of the general broadcast channel with three receivers. More specifically, it was shown in \cite{Gro-AP08} that the capacity region of the network is given by the set of nonnegative rate tuples 
$$(R_{\{1\}},R_{\{2\}},R_{\{3\}},R_{\{1,2\}},R_{\{2,3\}},R_{\{1,3\}},R_{\{1,2,3\}})$$ 
satisfying 
\begin{align}
R_{\{1\}}+&R_{\{1,2\}}+R_{\{1,3\}}+R_{\{1,2,3\}} \leq C_{\{1\}}+C_{\{1,2\}}+C_{\{1,3\}}+C_{\{1,2,3\}},\label{eq:CN3-1}\\
R_{\{2\}}+&R_{\{1,2\}}+R_{\{2,3\}}+R_{\{1,2,3\}} \leq C_{\{2\}}+C_{\{1,2\}}+C_{\{2,3\}}+C_{\{1,2,3\}},\label{eq:CN3-2}\\
R_{\{3\}}+&R_{\{1,3\}}+R_{\{2,3\}}+R_{\{1,2,3\}} \leq C_{\{3\}}+C_{\{1,3\}}+C_{\{2,3\}}+C_{\{1,2,3\}},\label{eq:CN3-3}\\
R_{\{1\}}+R_{\{2\}}+&R_{\{1,2\}}+R_{\{2,3\}}+R_{\{1,3\}}+R_{\{1,2,3\}}\notag\\
& \leq C_{\{1\}}+C_{\{2\}}+C_{\{1,2\}}+C_{\{2,3\}}+C_{\{1,3\}}+C_{\{1,2,3\}},\label{eq:CN3-4}\\
R_{\{2\}}+R_{\{3\}}+&R_{\{1,2\}}+R_{\{2,3\}}+R_{\{1,3\}}+R_{\{1,2,3\}}\notag\\
& \leq C_{\{2\}}+C_{\{3\}}+C_{\{1,2\}}+C_{\{2,3\}}+C_{\{1,3\}}+C_{\{1,2,3\}},\label{eq:CN3-5}\\
R_{\{1\}}+R_{\{3\}}+&R_{\{1,2\}}+R_{\{2,3\}}+R_{\{1,3\}}+R_{\{1,2,3\}}\notag\\
& \leq C_{\{1\}}+C_{\{3\}}+C_{\{1,2\}}+C_{\{2,3\}}+C_{\{1,3\}}+C_{\{1,2,3\}},\label{eq:CN3-6}\\
R_{\{1\}}+R_{\{2\}}+R_{\{3\}}+&R_{\{1,2\}}+R_{\{2,3\}}+R_{\{1,3\}}+R_{\{1,2,3\}}\notag\\
& \leq C_{\{1\}}+C_{\{2\}}+C_{\{3\}}+C_{\{1,2\}}+C_{\{2,3\}}+C_{\{1,3\}}+C_{\{1,2,3\}},\label{eq:CN3-7}\\
R_{\{1\}}+R_{\{2\}}+R_{\{3\}}+&2R_{\{1,2\}}+R_{\{2,3\}}+R_{\{1,3\}}+2R_{\{1,2,3\}}\notag\\
& \leq C_{\{1\}}+C_{\{2\}}+C_{\{3\}}+2C_{\{1,2\}}+C_{\{2,3\}}+C_{\{1,3\}}+2C_{\{1,2,3\}},\label{eq:CN3-8}\\
R_{\{1\}}+R_{\{2\}}+R_{\{3\}}+&R_{\{1,2\}}+2R_{\{2,3\}}+R_{\{1,3\}}+2R_{\{1,2,3\}}\notag\\
& \leq C_{\{1\}}+C_{\{2\}}+C_{\{3\}}+C_{\{1,2\}}+2C_{\{2,3\}}+C_{\{1,3\}}+2C_{\{1,2,3\}},\label{eq:CN3-9}\\
R_{\{1\}}+R_{\{2\}}+R_{\{3\}}+&R_{\{1,2\}}+R_{\{2,3\}}+2R_{\{1,3\}}+2R_{\{1,2,3\}}\notag\\
& \leq C_{\{1\}}+C_{\{2\}}+C_{\{3\}}+C_{\{1,2\}}+C_{\{2,3\}}+2C_{\{1,3\}}+2C_{\{1,2,3\}},\label{eq:CN3-10}\\
R_{\{1\}}+R_{\{2\}}+R_{\{3\}}+&2R_{\{1,2\}}+2R_{\{2,3\}}+2R_{\{1,3\}}+2R_{\{1,2,3\}}\notag\\
& \leq C_{\{1\}}+C_{\{2\}}+C_{\{3\}}+2C_{\{1,2\}}+2C_{\{2,3\}}+2C_{\{1,3\}}+2C_{\{1,2,3\}},\label{eq:CN3-11}\\
R_{\{1\}}+2R_{\{2\}}+2R_{\{3\}}+&2R_{\{1,2\}}+2R_{\{2,3\}}+2R_{\{1,3\}}+3R_{\{1,2,3\}}\notag\\
& \leq C_{\{1\}}+2C_{\{2\}}+2C_{\{3\}}+2C_{\{1,2\}}+2C_{\{2,3\}}+2C_{\{1,3\}}+3C_{\{1,2,3\}},\label{eq:CN3-12}\\
2R_{\{1\}}+R_{\{2\}}+2R_{\{3\}}+&2R_{\{1,2\}}+2R_{\{2,3\}}+2R_{\{1,3\}}+3R_{\{1,2,3\}}\notag\\
& \leq 2C_{\{1\}}+2C_{\{2\}}+C_{\{3\}}+2C_{\{1,2\}}+2C_{\{2,3\}}+2C_{\{1,3\}}+3C_{\{1,2,3\}},\label{eq:CN3-13}\\
2R_{\{1\}}+2R_{\{2\}}+R_{\{3\}}+&2R_{\{1,2\}}+2R_{\{2,3\}}+2R_{\{1,3\}}+3R_{\{1,2,3\}}\notag\\
& \leq 2C_{\{1\}}+2C_{\{2\}}+C_{\{3\}}+2C_{\{1,2\}}+2C_{\{2,3\}}+2C_{\{1,3\}}+3C_{\{1,2,3\}},\label{eq:CN3-14}\\
2R_{\{1\}}+2R_{\{2\}}+2R_{\{3\}}+&2R_{\{1,2\}}+2R_{\{2,3\}}+2R_{\{1,3\}}+3R_{\{1,2,3\}}\notag\\
& \leq 2C_{\{1\}}+2C_{\{2\}}+2C_{\{3\}}+2C_{\{1,2\}}+2C_{\{2,3\}}+2C_{\{1,3\}}+3C_{\{1,2,3\}}.\label{eq:CN3-15} 
\end{align}
From the converse viewpoint, the inequalities \eqref{eq:CN3-1}--\eqref{eq:CN3-7} follow directly from the standard cut-set bounds \eqref{eq:CSB} by considering the following three basic cuts: $A_1=\{a_{\{1\}},a_{\{1,2\}},a_{\{1,3\}},a_{\{1,2,3\}}\}$, $A_2=\{a_{\{2\}},a_{\{1,2\}},a_{\{2,3\}},a_{\{1,2,3\}}\}$, and $A_3=\{a_{\{3\}},a_{\{2,3\}},a_{\{1,3\}},a_{\{1,2,3\}}\}$. For the inequalities \eqref{eq:CN3-8}--\eqref{eq:CN3-15}, the proof provided in \cite{Gro-AP08} was problem-specific and appears to be rather hand-crafted. With the generalized cut-set bounds now in place, however, it is clear that the inequalities \eqref{eq:CN3-8}--\eqref{eq:CN3-10} follow directly from \eqref{eq:GCSB3a}; the inequality \eqref{eq:CN3-11} follows directly from \eqref{eq:GCSB3b}; the inequalities \eqref{eq:CN3-12}--\eqref{eq:CN3-14} follow directly from \eqref{eq:GCSB3c}; and the inequality \eqref{eq:CN3-15} follows directly from \eqref{eq:GCSB3d}. Thus, the standard and the generalized cut-set bounds together provide an \emph{exact} characterization of the capacity region of the general combination network with three sink nodes and a complete message set.

Next, let us consider the general combination network with $K$ sink nodes and \emph{symmetrical} link capacity constraints \cite{Tia-IT11}:
\begin{align}
C_U=C_{|U|}, \quad \forall U \subseteq [K]
\end{align}
i.e., the link-capacity constraint for arc $a_U$ depends on the subset $U$ only via its cardinality. Assume that the source $s$ has access to a set of $K+1$ independent messages $(\mathsf{W}_1,\ldots,\mathsf{W}_K,\mathsf{W}_0)$, where $\mathsf{W}_k$, $k=1,\ldots,K$, is a private message intended only for the sink node $t_k$, and $\mathsf{W}_0$ is a common message intended for all $K$ sink nodes in the network. For this communication scenario, note that $A_k=\{a_U: U\ni k\}$ is a basic cut that separates the source node $s$ from the sink node $t_k$ for each $k=1,\ldots,K$. Applying Corollary~\ref{cor:GCSBK3} with $U=[K]$, we have
\begin{align}
KR_0+mR_{sp} & \leq m\sum_{r=1}^K
\left(
\begin{array}{c}
  K   \\
  r   
\end{array}
\right)
C_r+\sum_{r=m+1}^K\sum_{j=r}^K\left(
\begin{array}{c}
  K   \\
  j   
\end{array}
\right)
C_j\\
& = 
m\sum_{r=1}^K
\left(
\begin{array}{c}
  K   \\
  r   
\end{array}
\right)
C_r+\sum_{r=m+1}^K(r-m)\left(
\begin{array}{c}
  K   \\
  r   
\end{array}
\right)
C_r
\label{eq:CNK-1}
\end{align}
for any achievable rate tuple $(R_0,R_1,\ldots,R_K)$ and any $m=1,\ldots,K$, where $R_{sp}=\sum_{k=1}^{K}R_k$ is the sum of the private rates. It is clear that the outer bound given by the inequality \eqref{eq:CNK-1} for $m=1,\ldots,K$ has exactly $K+1$ corner points:
$$\left(\sum_{i=r}^K
\left(
\begin{array}{c}
  K-1   \\
  i-1   
\end{array}
\right)
C_i,\sum_{i=1}^{r-1}
\left(
\begin{array}{c}
  K   \\
  i   
\end{array}
\right)
C_i\right), \quad r=1,\ldots,K+1.$$
The achievability of these corner points was proved in \cite{Tia-IT11}. Therefore, the generalized cut-set bounds also provide a \emph{tight} characterization of the common-v.s.-sum-private capacity region of the general symmetrical combination network.

\begin{figure}[!t]
\centering
\includegraphics[width=0.5\linewidth,draft=false]{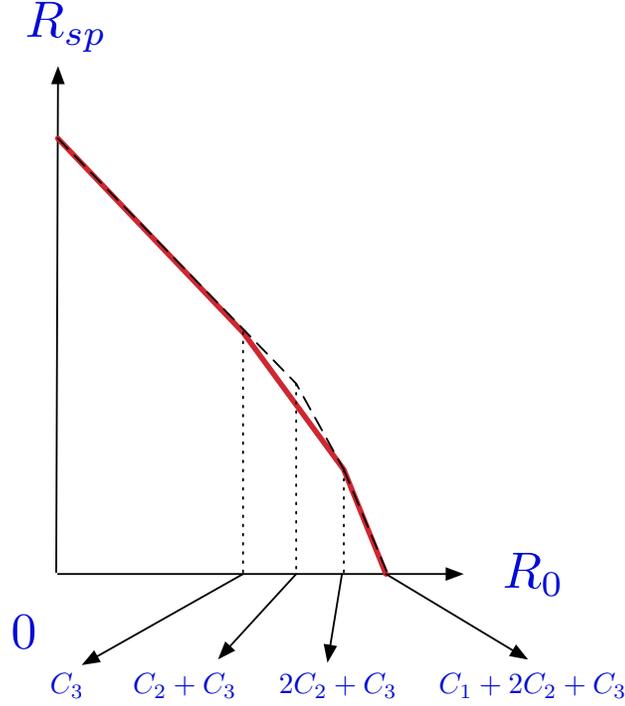}
\caption{Capacity v.s. cut-set outer regions for $K=3$ sinks. The boundary of the capacity region is illustrated by solid lines, while the boundary of the cut-set outer region is illustrated by dashed lines.}
\label{fig:CNK}
\end{figure}

Finally, let us make an explicit comparison between the common-v.s.-sum-private capacity region of the general symmetrical combination network and the outer region given by \emph{just} the standard cut-set bounds for the case of $K=3$ sink nodes. For $K=3$, the common-v.s.-sum-private capacity region of the network is given by all nonnegative $(R_0,R_{sp})$ pairs satisfying
\begin{align}
\begin{array}{rcl}
3R_0+R_{sp} & \leq & 3C_1+6C_2+3C_3,\\
3R_0+2R_{sp} & \leq & 6C_1+6C_2+3C_3,\\
\mbox{and} \quad R_0+R_{sp} & \leq & 3C_1+3C_2+C_3.
\end{array}
\label{eq:CNK-2}
\end{align}
The standard cut-set bounds, in this case, are given by
\begin{align}
\begin{array}{rcl}
R_0+R_1 & \leq & C_1+2C_2+C_3,\\
R_0+R_2 & \leq & C_1+2C_2+C_3,\\
R_0+R_3 & \leq & C_1+2C_2+C_3,\\
R_0+R_1+R_2 & \leq & 2C_1+3C_2+C_3,\\
R_0+R_1+R_3 & \leq & 2C_1+3C_2+C_3,\\
R_0+R_3+R_2 & \leq & 2C_1+3C_2+C_3,\\
R_0+R_1+R_2+R_3 & \leq & 2C_1+3C_2+C_3.
\end{array}
\label{eq:CNK-3}
\end{align}
Substituting $R_1=R_{sp}-R_2-R_3$ into \eqref{eq:CNK-3} and using Fourier-Motzkin elimination to eliminate $R_2$ and $R_3$ from the inequalities in \eqref{eq:CNK-3}, we may explicitly write the outer region given by just the standard cut-set bounds as the nonnegative $(R_0,R_{sp})$ pairs satisfying
\begin{equation}
\begin{array}{rcl}
3R_0+R_{sp} & \leq & 3C_1+6C_2+3C_3,\\
2R_0+R_{sp} & \leq & 3C_1+5C_2+2C_3,\\
\mbox{and} \quad R_0+R_{sp} & \leq & 3C_1+3C_2+C_3.
\end{array}
\label{eq:CNK-4}
\end{equation}
In Figure~\ref{fig:CNK} we illustrate the rate regions constrained by \eqref{eq:CNK-2} and \eqref{eq:CNK-4}, respectively. Clearly, even for the case with only $K=3$ sink nodes, the standard cut-set bounds alone are \emph{not} tight, while the generalized cut-set bounds provide a precise characterization of the common-v.s.-sum-private capacity region. 

\section{Concluding Remarks}\label{sec:Con}
The paper considered the problem of coding over broadcast networks with multiple (multicast) messages and more than two sink nodes. The standard cut-set bounds, which are known to be loose in general, are closely related to union as a specific set operation to combine different basic cuts of the network. A new set of network coding bounds (termed as \emph{generalized} cut-set bounds), which relate the basic cuts of the network via a variety of set operations (not just the union), were established via the submodularity of the Shannon entropy. It was shown that the generalized cut-set bounds (together with the standard cut-set bounds) provide a precise characterization of the capacity region of the general combination network with three sink nodes and the common-v.s.-sum-private capacity region of the general symmetrical combination network (with arbitrary number of sink nodes).

Our ongoing work focuses primarily on further understanding the strength and the limitations of the generalized cut-set bounds established in this paper. In particular, it would be interesting to see whether the generalized cut-set bounds are tight for the \emph{symmetrical} capacity region of the general symmetrical combination network, which was recently characterized by Tian \cite{Tia-IT11}.

\appendix

\section{Proof of Lemma~\ref{lemma:1}}\label{app:pf-lemma1}
Fix two integers $r'$ and $J$ such that $0 < r' < J \leq K$. Let
\begin{align}
T_r := \left\{
\begin{array}{rl}
\emptyset, & \mbox{for} \; r=1,\ldots,r'\\
S^{(r'+1)}([r]), & \mbox{for} \; r=r'+1,\ldots,J,
\end{array}
\right.
\end{align}
and let $G_r:=S_r \cup T_r$ for $r=1,\ldots,J$. By the standard multiway submodularity \eqref{eq:submodK} and modularity \eqref{eq:modK} we have
\begin{align}
\sum_{r=1}^{r'}f(S_r)+\sum_{r=r'+1}^{J}f(S_r\cup S^{(r'+1)}([r])) & =\sum_{r=1}^{J}f(G_r) \geq \sum_{r=1}^{J}f(G^{(r)}([J]))\label{eq:ASsubmod}
\end{align}
if $f$ is a submodular function, and
\begin{align}
\sum_{r=1}^{r'}f(S_r)+\sum_{r=r'+1}^{J}f(S_r\cup S^{(r'+1)}([r])) & =\sum_{r=1}^{J}f(G_r) = \sum_{r=1}^{J}f(G^{(r)}([J]))
\label{eq:ASmod}
\end{align}
if $f$ is a modular function. Next, we shall show that 
\begin{align}
G^{(r)}([J]) = \left\{
\begin{array}{rl}
S^{(r)}([J]), & \mbox{for} \; r=1,\ldots,r'\\
S^{(r'+1)}([J-r+r'+1]), & \mbox{for} \; r=r'+1,\ldots,J.
\end{array}
\right.
\label{eq:AS}
\end{align}

We shall consider the following two cases separately.

Case 1: $r \in [r']$. Note that $S_r \subseteq G_r$ for any $r \in [J]$, so we have $S^{(r)}([J]) \subseteq G^{(r)}([J])$ for any $r \in [J]$. On the other hand, since $T_r \subseteq S^{(r'+1)}([J])$ for all $r \in [J]$, we have $G_r \subseteq S_r \cup S^{(r'+1)}([J])$ and hence $G^{(r)}([J]) \subseteq S^{(r)}([J]) \cup S^{(r'+1)}([J])$ for all $r \in [J]$. Since $S^{(r)}([J]) \supseteq S^{(r'+1)}([J])$ for all $r \in [r']$, we have $G^{(r)}([J]) \subseteq S^{(r)}([J])$ for all $r \in [r']$. We thus conclude that $G^{(r)}([J]) = S^{(r)}([J])$ for all $r \in [r']$.

Case 2: $r \in \{r'+1,\ldots,J\}$. For this case, we have the following fact.

\begin{fact}\label{fact}
For any $r \in \{r'+1,\ldots,J\}$, we have
\begin{align}
G^{(r)}([J]) & = \cup_{m=1}^{\min\{r,r'+2\}}\left(S^{(m-1)}([J-r+m-1])\cap T_{J-r+m}\right).
\end{align}
\end{fact}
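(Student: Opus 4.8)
The plan is to prove Fact~\ref{fact} by expanding the defining intersection of $G^{(r)}([J])$ and then collapsing the resulting union with the help of the monotonicity of the sets $T_k$. Writing $G_k=S_k\cup T_k$, we start from
\begin{align}
G^{(r)}([J])=\bigcup_{\{W\subseteq[J]:\,|W|=r\}}\bigcap_{k\in W}(S_k\cup T_k).
\end{align}
First I would apply the distributive law to each inner intersection, expressing $\bigcap_{k\in W}(S_k\cup T_k)$ as the union over all $P\subseteq W$ of the terms $\bigl(\bigcap_{k\in P}S_k\bigr)\cap\bigl(\bigcap_{k\in W\setminus P}T_k\bigr)$. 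Here two facts do the work: by \eqref{eq:myOrd2} the sets satisfy $T_{r'+1}\subseteq\cdots\subseteq T_J$ with $T_k=\emptyset$ for $k\le r'$, so $\bigcap_{k\in W\setminus P}T_k=T_{\min(W\setminus P)}$, which is empty unless $\min(W\setminus P)>r'$. For each admissible value $\ell=\min(W\setminus P)$ the largest surviving term is obtained by taking $P=\{k\in W:k<\ell\}$; the pure-$S$ term $\bigcap_{k\in W}S_k$ (the case $W\setminus P=\emptyset$) is absorbed because, since $r>r'$, it is contained in $S^{(r'+1)}([w_r])=T_{w_r}$. Thus for $W=\{w_1<\cdots<w_r\}$,
\begin{align}
\bigcap_{k\in W}(S_k\cup T_k)=\bigcup_{p}\Bigl(\bigcap_{i=1}^{p}S_{w_i}\Bigr)\cap T_{w_{p+1}},
\end{align}
the union running over the admissible numbers $p$ of $S$-factors.

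Next I would take the union over all $W$ and regroup it by the pair $(p,\ell)$, where $\ell=w_{p+1}$ is the index carried by $T_\ell$. For fixed $p$ and $\ell$, unioning over all choices $w_1<\cdots<w_p<\ell$ reproduces exactly $S^{(p)}([\ell-1])$, so the contribution is $S^{(p)}([\ell-1])\cap T_\ell$; the requirement that $W$ contain $r-p-1$ further indices above $\ell$ forces $\ell\le J-r+p+1$. Because both $S^{(p)}([\ell-1])$ and $T_\ell$ are nondecreasing in $\ell$, the union over $\ell$ is attained at $\ell=J-r+p+1$, giving $S^{(p)}([J-r+p])\cap T_{J-r+p+1}$; setting $m=p+1$ produces precisely the $m$-th term of the claimed formula. (Terms for which $J-r+m\le r'$ contribute nothing, since then $T_{J-r+m}=\emptyset$, so they may be harmlessly retained in the union.)

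It remains to identify the range of $m$, and here is where the main obstacle lies. The index $p$ never exceeds $r-1$, giving $m\le r$; for the upper cutoff one shows that when $p\ge r'+1$ the $T$-factor becomes redundant, because $S^{(p)}([\ell-1])\subseteq S^{(r'+1)}([\ell-1])\subseteq T_\ell$ by \eqref{eq:myOrd1}, and the resulting sets $S^{(p)}([J-r+p])$ decrease in $p$ (using the elementary inclusion $S^{(p+1)}([N+1])\subseteq S^{(p)}([N])$, which I would verify directly from the definition). Hence all terms with $p\ge r'+1$ are absorbed into the $p=r'+1$ term, yielding $m\le\min\{r,r'+2\}$. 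The delicate part throughout is the bookkeeping of the distributive expansion: verifying that exactly the claimed terms are maximal, that the pure-$S$ term and all $p>r'+1$ terms are genuinely absorbed, and that the index shift $\ell=J-r+p+1$ together with the boundary conventions for $S^{(0)}$ and for $T_k$ with $k\le r'$ make the degenerate cases match the stated union.
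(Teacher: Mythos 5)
Your proposal is correct and follows essentially the same route as the paper's proof: distributive expansion of $\cap_{k\in W}(S_k\cup T_k)$, collapsing the $T$-intersections to $T_{\min(W\setminus P)}$ by monotonicity, keeping only the maximal term for each value of the minimum, absorbing the pure-$S$ term, and then maximizing the surviving indices over the choice of $W$ to obtain $S^{(m-1)}([J-r+m-1])\cap T_{J-r+m}$. The only cosmetic difference is that you absorb the terms with $p\geq r'+1$ $S$-factors \emph{after} regrouping the union by $(p,\ell)$, via the elementary inclusion $S^{(p+1)}([N+1])\subseteq S^{(p)}([N])$, whereas the paper absorbs them within each fixed $W$ before exchanging the unions; the underlying ideas are identical.
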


\begin{proof}
Fix $r \in \{r'+1,\ldots,J\}$. By definition,
\begin{align}
G^{(r)}([J]) = \cup_{\{U \subseteq [J]:|U|=r\}}\cap_{k \in U}G_k.
\label{eq:AS0}
\end{align}
Fix $U \subseteq [J]$ such that $|U|=r$. We have
\begin{align}
\cap_{k \in U}G_k & = \cap_{k \in U}\left(S_k\cup T_k\right)\\
& = \cup_{U' \subseteq U}\left((\cap_{k \in U'}S_k)\cap(\cap_{k \in U\setminus U'}T_k)\right)\\
& = \left(\cup_{U' \subset U}\left((\cap_{k \in U'}S_k)\cap T_{\bar{k}(U')}\right)\right)\cup\left(\cap_{k \in U}S_k\right)\label{eq:AS1}
\end{align}
where $\bar{k}(U')$ is the \emph{smallest} integer in $U\setminus U'$, and \eqref{eq:AS1} follows from the fact that
\begin{align}
T_1 \subseteq T_{2} \subseteq \cdots \subseteq T_{J}.
\label{eq:ordT}
\end{align}
Write, without loss of generality, that $U=\{u_1,\ldots,u_r\}$ where $1 \leq u_1 < u_2 < \cdots < u_r \leq J$. Fix $\bar{k}(U')=u_m$ for some $m \in [r]$. Then we must have $U' \supseteq \{u_1,\ldots,u_{m-1}\}$ for any such $U'$. We thus have from \eqref{eq:AS1} that
\begin{align}
\cap_{k \in U}G_k=\left(\cup_{m=1}^{r}\left((\cap_{l=1}^{m-1}S_{u_l})\cap T_{u_m}\right)\right)\cup\left(\cap_{l=1}^{r}S_{u_l}\right).
\label{eq:AS2}
\end{align}

The right-hand side of \eqref{eq:AS2} can be further simplified based on the following two observations. First, for any $r \in \{r'+1,\ldots,J\}$ we have $u_r \geq r \geq r'+1$ and hence
\begin{align}
T_{u_r} &=S^{(r'+1)}([u_r]) \supseteq \cap_{l=1}^{r'+1}S_{u_l} \supseteq \cap_{l=1}^{r}S_{u_l}.
\end{align}
We thus have
\begin{align}
\cap_{l=1}^{r}S_{u_l} \subseteq (\cap_{l=1}^{r-1}S_{u_l})\cap T_{u_r}
\end{align}
and hence
\begin{align}
\cap_{k \in U}G_k=\cup_{m=1}^{r}\left((\cap_{l=1}^{m-1}S_{u_l})\cap T_{u_m}\right).
\label{eq:AS3}
\end{align}
Second, since $u_{r'+2} \geq r'+2$, we have
\begin{align}
\cap_{l=1}^{r'+1}S_{u_l} \subseteq S^{(r'+1)}([u_{r'+2}])=T_{u_{r'+2}}
\end{align}
and hence
\begin{align}
(\cap_{l=1}^{r'+1}S_{u_l})\cap T_{u_{r'+2}}=\cap_{l=1}^{r'+1}S_{u_l}.
\end{align}
It follows that for any $m \geq r'+2$, we have
\begin{align}
(\cap_{l=1}^{m-1}S_{u_l})\cap T_{u_m} \subseteq \cap_{l=1}^{r'+1}S_{u_l}= (\cap_{l=1}^{r'+1}S_{u_l})\cap T_{u_{r'+2}}.
\label{eq:AS3.5}
\end{align}
Substituting \eqref{eq:AS3.5} into \eqref{eq:AS3}, we have
\begin{align}
\cap_{k \in U}G_k=\cup_{m=1}^{\min\{r,r'+2\}}\left((\cap_{l=1}^{m-1}S_{u_l})\cap T_{u_m}\right).
\label{eq:AS3.6}
\end{align}

Finally, substituting \eqref{eq:AS3.6} into \eqref{eq:AS0}, we have
\begin{align}
G^{(r)}([J]) &= \cup_{\{U \subseteq [J]:|U|=r\}}\left(\cup_{m=1}^{\min\{r,r'+2\}}\left((\cap_{l=1}^{m-1}S_{u_l})\cap T_{u_m}\right)\right)\\
&= \cup_{m=1}^{\min\{r,r'+2\}}\left(\cup_{\{U \subseteq [J]:|U|=r\}}\left((\cap_{l=1}^{m-1}S_{u_l})\cap T_{u_m}\right)\right)
\label{eq:AS4}
\end{align}
for any $r \in \{r'+1,\ldots,J\}$. Note that for any $U \subseteq [J]$ such that $|U|=r$, the largest numerical value that $u_m$ can assume is  $J-r+m$ for any $m\in [r]$. By the ordering in \eqref{eq:ordT}, for any $m=1,\ldots,r$ we have
\begin{align}
\cup_{\{U \subseteq [J]:|U|=r\}}\left((\cap_{l=1}^{m-1}S_{u_l})\cap T_{J-r+m}\right) 
&= \left(\cup_{\{1 \leq u_1 < u_2 < \cdots < u_{m-1} \leq J-r+m-1\}}\cap_{l=1}^{m-1}S_{u_l}\right)\cap T_{J-r+m}\\
&= S^{(m-1)}([J-r+m-1])\cap T_{J-r+m}.
\label{eq:AS5}
\end{align}
Substituting \eqref{eq:AS5} into \eqref{eq:AS4} completes the proof of the fact.
\end{proof}

Further note that for any $r \in \{r'+1,\ldots,J\}$ we have
\begin{align}
T_{J-r+m} \subseteq S^{(r'+1)}([J-r+m]) \subseteq S^{(r')}([J-r+m-1]) \subseteq S^{(m-1)}([J-r+m-1])
\label{eq:AS6}
\end{align}
for any $2 \leq m \leq r'+1$. When $r=r'+1$, substituting \eqref{eq:ordT} and \eqref{eq:AS6} into Fact~\ref{fact} we have
\begin{align}
G^{(r)}([J]) &= \cup_{m=1}^{r}T_{J-r+m}=T_J=S^{(r'+1)}([J]).
\label{eq:AS7}
\end{align}
When $r \in \{r'+2,\ldots,J\}$, by Fact~\ref{fact} we have
\begin{align}
G^{(r)}([J]) &= \cup_{m=1}^{r'+2}\left(S^{(m-1)}([J-r+m-1])\cap T_{J-r+m}\right)\\
&= \left(\cup_{m=1}^{r'+1}\left(S^{(m-1)}([J-r+m-1])\cap T_{J-r+m}\right)\right)\cup\notag\\
& \hspace{13pt} \left(S^{(r'+1)}([J-r+r'+1])\cap T_{J-r+r'+2}\right)\\
&= \left(\cup_{m=1}^{r'+1}T_{J-r+m}\right)\cup\left(S^{(r'+1)}([J-r+r'+1])\cap T_{J-r+r'+2}\right)\label{eq:AS7.1}\\
&= T_{J-r+r'+1}\cup\left(S^{(r'+1)}([J-r+r'+1])\cap T_{J-r+r'+2}\right)\label{eq:AS7.2}\\
&= S^{(r'+1)}([J-r+r'+1])\cup\left(S^{(r'+1)}([J-r+r'+1])\cap S^{(r'+1)}([J-r+r'+2])\right)\label{eq:AS7.3}\\
&= S^{(r'+1)}([J-r+r'+1])\label{eq:AS8}
\end{align}
where \eqref{eq:AS7.1} follows from \eqref{eq:AS6}, and \eqref{eq:AS7.2} follows from the ordering in \eqref{eq:ordT}. Combining \eqref{eq:AS7} and \eqref{eq:AS8} completes the proof of \eqref{eq:AS} for $r \in \{r'+1,\ldots,J\}$.

Finally, substituting \eqref{eq:AS} into \eqref{eq:ASsubmod} and \eqref{eq:ASmod} we have
\begin{align}
\sum_{r=1}^{r'}f(S_r)+\sum_{r=r'+1}^{J}f(S_r\cup S^{(r'+1)}([r])) & \geq \sum_{r=1}^{r'}f(S^{(r)}([J]))+\sum_{r=r'+1}^{J}f(S^{(r'+1)}([J-r+r'+1]))\\
& = \sum_{r=1}^{r'}f(S^{(r)}([J]))+\sum_{r=r'+1}^{J}f(S^{(r'+1)}([r]))
\end{align}
if $f$ is a submodular function, and
\begin{align}
\sum_{r=1}^{r'}f(S_r)+\sum_{r=r'+1}^{J}f(S_r\cup S^{(r'+1)}([r])) & =\sum_{r=1}^{r'}f(S^{(r)}([J]))+\sum_{r=r'+1}^{J}f(S^{(r'+1)}([r]))
\end{align}
if $f$ is a modular function. This completes the proof of Lemma~\ref{lemma:1}.

\section{Proof of Lemma~\ref{lemma:2}}\label{app:pf-lemma2}
Without loss of generality, we may assume that $T=[|T|]$ such that $t_r=r$ for all $r=1,\ldots,|T|$. Under this assumption, the inequality \eqref{eq:3a} can be written as
\begin{align}
\sum_{r=1}^{|T|}&f(S_r)+r_qf(S^{(q)}(U))\notag\\
& \ge \sum_{r=1}^{r_q}\left(f(S^{(r)}(T))+f(S_r\cap S^{(q)}(U))\right)+\sum_{r=r_q+1}^{|T|}f(S_r\cap(S^{(q)}(U)\cup S^{(r_q+1)}([r]))).
\label{eq:3a'}
\end{align}

Assume that $f$ is a modular function. By the two-way submodularity \eqref{eq:submod2} we have
\begin{align}
\sum_{r=1}^{|T|}&f(S_r)+r_qf(S^{(q)}(U))\notag\\
& =\sum_{r=1}^{r_q}\left(f(S_r)+f(S^{(q)}(U))\right)+
\sum_{r=r_q+1}^{|T|}\left(f(S_r)+f(S^{(q)}(U)\cup S^{(r_q+1)}([r]))\right)-\notag\\
& \hspace{12pt} \sum_{r=r_q+1}^{|T|}f(S^{(q)}(U)\cup S^{(r_q+1)}([r]))\label{eq:3c}\\ 
& \geq\sum_{r=1}^{r_q}\left(f(S_r\cap S^{(q)}(U))+f(S_r\cup S^{(q)}(U))\right)+\notag\\
& \hspace{12pt} \sum_{r=r_q+1}^{|T|}\left(f(S_r\cap(S^{(q)}(U)\cup S^{(r_q+1)}([r])))+f(S_r\cup(S^{(q)}(U)\cup S^{(r_q+1)}([r])))\right)-\notag\\
& \hspace{12pt} \sum_{r=r_q+1}^{|T|}f(S^{(q)}(U)\cup S^{(r_q+1)}([r])).\label{eq:3d}
\end{align}
Applying Corollary~\ref{cor:1} with $r'=r_q$, $J=|T|$, and $S_0=S^{(q)}(U)$, we have
\begin{align}
\sum_{r=1}^{r_q}&f(S_r\cup S^{(q)}(U))+\sum_{r=r_q+1}^{|T|}f(S_r\cup S^{(r_q+1)}([r])\cup S^{(q)}(U))\notag\\
& \ge
\sum_{r=1}^{r_q}f(S^{(r)}(T)\cup S^{(q)}(U))+\sum_{r=r_q+1}^{|T|}f(S^{(r_q+1)}([r])\cup S^{(q)}(U))\label{eq:3e}\\
& =
\sum_{r=1}^{r_q}f(S^{(r)}(T))+\sum_{r=r_q+1}^{|T|}f(S^{(r_q+1)}([r])\cup S^{(q)}(U))\label{eq:3f}
\end{align}
where \eqref{eq:3f} follows from the assumption $S^{(r_q)}(T) \supseteq S^{(q)}(U)$ such that $S^{(r)}(T) \supseteq S^{(q)}(U)$ for any $r=1,\ldots,r_q$. Substituting \eqref{eq:3f} into \eqref{eq:3d} completes the proof of \eqref{eq:3a'} and hence that of \eqref{eq:3a}. 

When $f$ is a modular function, both inequalities \eqref{eq:3d} and \eqref{eq:3e} hold with an equality. This completes the proof of \eqref{eq:3b} and hence that of the entire corollary.

\section{Proof of Corollary~\ref{cor:GCSBK2}}\label{app:pf-cor4}
Note that when $Q=\emptyset$, $\beta_Q(r)=1$ for all $r \in [|U|]$. In this case, the corollary follows directly from \eqref{eq:GCSBK2}. Now, assume that $Q$ is nonempty. Write, without loss of generality, that $Q=\{q_1,\ldots,q_{|Q|}\}$ where 
\begin{align}
1=:q_0 < q_1 < q_2 < \cdots < q_{|Q|} \leq |U|.
\end{align} 
Note that
\begin{align}
\sum_{q \in Q}\sum_{r=1}^{q-1}\alpha_Q(q,r)R(I^{(r)}(U)) = \sum_{r=1}^{q_{|Q|}-1}\beta'_Q(r)R(I^{(r)}(U))
\label{eq:T2}
\end{align}
where
\begin{align}
\beta'_Q(r) &= \sum_{l=m}^{|Q|}\alpha_Q(q_l,r)
\label{eq:beta'}
\end{align}
for any $q_{m-1} \leq r < q_m$ for some $m \in [|Q|]$. When $r = q_m$ for some $m \in [|Q|-1]$, by \eqref{eq:alpha2} and \eqref{eq:beta'} we have $\alpha_Q(q_l,r)=0$ for any $l=m,\ldots,|Q|$ and hence 
\begin{align}
\beta'_Q(r)=0.
\label{eq:beta'2}
\end{align}
When $q_{m-1} < r < q_m$ for some $m \in [|Q|]$, by \eqref{eq:alpha2} and \eqref{eq:beta'} we have
\begin{align}
\alpha_Q(q_l,r) &= \frac{\prod_{t=1}^{m-1}(q_t-1)\prod_{t=m}^{l-1}q_t}{\prod_{t=1}^{l}(q_t-1)}
\end{align}
for any $l=m,\ldots,|Q|$ and hence
\begin{align}
\beta'_Q(r) &= \sum_{l=m}^{|Q|}\frac{\prod_{t=1}^{m-1}(q_t-1)\prod_{t=m}^{l-1}q_t}{\prod_{t=1}^{l}(q_t-1)}\\
&= \frac{\prod_{t=1}^{m-1}(q_t-1)}{\prod_{t=1}^{|Q|}(q_t-1)}\sum_{l=m}^{|Q|}\left(\prod_{t=m}^{l-1}q_t\prod_{t=l+1}^{|Q|}(q_t-1)\right)\\
&= \frac{\prod_{t=1}^{m-1}(q_t-1)}{\prod_{t=1}^{|Q|}(q_t-1)}\sum_{l=m}^{|Q|}\left((q_l-(q_l-1))\prod_{t=m}^{l-1}q_t\prod_{t=l+1}^{|Q|}(q_t-1)\right)\\
&= \frac{\prod_{t=1}^{m-1}(q_t-1)}{\prod_{t=1}^{|Q|}(q_t-1)}\sum_{l=m}^{|Q|}\left(\prod_{t=m}^{l}q_t\prod_{t=l+1}^{|Q|}(q_t-1)-\prod_{t=m}^{l-1}q_t\prod_{t=l}^{|Q|}(q_t-1)\right)\\
&= \frac{\prod_{t=1}^{m-1}(q_t-1)}{\prod_{t=1}^{|Q|}(q_t-1)}\left(\prod_{t=m}^{|Q|}q_t-\prod_{t=m}^{|Q|}(q_t-1)\right)\\
&= \frac{\prod_{t=1}^{m-1}(q_t-1)\prod_{t=m}^{|Q|}q_t}{\prod_{t=1}^{|Q|}(q_t-1)}-1\\
&= \frac{\beta_Q(r)}{\prod_{t=1}^{|Q|}(q_t-1)}-1,\label{eq:beta'3}
\end{align}
where \eqref{eq:beta'3} follows from the fact that 
\begin{align}
\beta_Q(r) = \prod_{t=1}^{m-1}(q_t-1)\prod_{t=m}^{|Q|}q_t, \quad \forall q_{m-1} < r < q_m
\end{align}
by the definition \eqref{eq:beta} of $\beta_Q(r)$.

By \eqref{eq:T2}, \eqref{eq:beta'2}, and \eqref{eq:beta'3}, the left-hand side of \eqref{eq:GCSBK2} can be simplified as 
\begin{align}
\sum_{r \in [|U|]\setminus Q}&R(I^{(r)}(U))+\sum_{q \in Q}\sum_{r=1}^{q-1}\alpha_Q(q,r)R(I^{(r)}(U))\notag\\
&= \sum_{r \in [|U|]\setminus Q}R(I^{(r)}(U))+\sum_{r=1}^{q_{|Q|}-1}\beta'_Q(r)R(I^{(r)}(U))\\
&= \sum_{r \in [|U|]\setminus Q}R(I^{(r)}(U))+\sum_{r \in [q_{|Q|}]\setminus Q}\left(\frac{\beta_Q(r)}{\prod_{t=1}^{|Q|}(q_t-1)}-1\right)R(I^{(r)}(U))\label{eq:T100}\\
&= \frac{1}{\prod_{t=1}^{|Q|}(q_t-1)}\left(\sum_{r \in [q_{|Q|}]\setminus Q}\beta_Q(r)R(I^{(r)}(U))+\left(\prod_{t=1}^{|Q|}(q_t-1)\right)\sum_{r=q_{|Q|}+1}^{|U|}R(I^{(r)}(U))\right)\label{eq:T200}\\
&= \frac{1}{\prod_{t=1}^{|Q|}(q_t-1)}\left(\sum_{r=1}^{q_{|Q|}}\beta_Q(r)R(I^{(r)}(U))+\sum_{r=q_{|Q|}+1}^{|U|}\beta_Q(r)R(I^{(r)}(U))\right)\label{eq:T300}\\
&= \frac{1}{\prod_{t=1}^{|Q|}(q_t-1)}\sum_{r=1}^{|U|}\beta_Q(r)R(I^{(r)}(U)),\label{eq:T400}
\end{align}
where \eqref{eq:T300} follows from the facts that $\beta_Q(r)=0$ for all $r \in Q$ and that
\begin{align}
\beta_Q(r)=\prod_{t=1}^{|Q|}(q_t-1), \quad \forall r \geq q_{|Q|}+1
\end{align}
by the definition \eqref{eq:beta} of $\beta_Q(r)$.

Similarly, the right-hand side of \eqref{eq:GCSBK2} can be simplified as 
\begin{align}
\sum_{r \in [|U|]\setminus Q}C(A^{(r)}(U))+\sum_{q \in Q}\sum_{r=1}^{q-1}\alpha_Q(q,r)C(A^{(r)}(U))
= \frac{1}{\prod_{t=1}^{|Q|}(q_t-1)}\sum_{r=1}^{|U|}\beta_Q(r)C(A^{(r)}(U)).\label{eq:T500}
\end{align}
Substituting \eqref{eq:T400} and \eqref{eq:T500} into \eqref{eq:GCSBK2} and multiplying both sides of the inequality by $\prod_{t=1}^{|Q|}(q_t-1)$ complete the proof of Corollary~\ref{cor:GCSBK2}.

\end{document}